\newtheorem{lemma}{Lemma}
\numberwithin{lemma}{section}
\newtheorem{theorem}[lemma]{Theorem}
\newtheorem{observation}[lemma]{Observation}
\newtheorem{fact}[lemma]{Fact}
\newtheorem{claim}[lemma]{Claim}
\definecolor{blueish}{rgb}{0.122, 0.435, 0.698}
\definecolor{dagstuhlyellow}{rgb}{0.99,0.78,0.07}
\definecolor{lightgray}{rgb}{0.9,0.9,0.9}
\newtcbox{\colbox}{
size=title,
  nobeforeafter,
  colframe=white,
  colback=blue!5!white,
  arc=10pt,
  tcbox raise base}
\title{Bipartizing (Pseudo-)Disk Graphs: \\ Approximation with a Ratio Better than 3}
\author{
Daniel Lokshtanov\footnote{University of California, Santa Barbara, USA. Email: \texttt{daniello@ucsb.edu}} \and Fahad Panolan\footnote{University of Leeds, UK. Email: \texttt{F.panolan@leeds.ac.uk}} \and Saket Saurabh\footnote{Institute of Mathematical Sciences, Chennai, India. Email: \texttt{saket@imsc.res.in}} \and Jie Xue\footnote{New York University Shanghai, China. Email: \texttt{jiexue@nyu.edu}} \and Meirav Zehavi\footnote{Ben-Gurion University, Israel. Email: \texttt{meiravze@bgu.ac.il}}}
\date{}
\begin{document}

\maketitle

\begin{abstract}
    In a disk graph, every vertex corresponds to a disk in $\mathbb{R}^2$ and two vertices are connected by an edge whenever the two corresponding disks intersect.
    Disk graphs form an important class of geometric intersection graphs, which generalizes both planar graphs and unit-disk graphs.
    We study a fundamental optimization problem in algorithmic graph theory, \textsc{Bipartization} (also known as {\sc Odd Cycle Transversal}), on the class of disk graphs.
    The goal of \textsc{Bipartization} is to delete a minimum number of vertices from the input graph such that the resulting graph is bipartite.
    A folklore (polynomial-time) $3$-approximation algorithm for \textsc{Bipartization} on disk graphs follows from the classical framework of Goemans and Williamson [Combinatorica'98] for cycle-hitting problems.
    For over two decades, this result has remained the best known approximation for the problem (in fact, even for \textsc{Bipartization} on unit-disk graphs).
    In this paper, we achieve the first improvement upon this result, by giving a $(3-\alpha)$-approximation algorithm for \textsc{Bipartization} on disk graphs, for some constant $\alpha>0$.
    Our algorithm directly generalizes to the broader class of \textit{pseudo-disk} graphs.
    Furthermore, our algorithm is \textit{robust} in the sense that it does not require a geometric realization of the input graph to be given.
\end{abstract}

\section{Introduction}
Disk graphs refer to intersection graphs of disks in the plane $\mathbb{R}^2$.
Formally, in a disk graph, every vertex corresponds to a disk in $\mathbb{R}^2$ and two vertices are connected by an edge whenever the two corresponding disks intersect.
As a rather general class of geometric intersection graphs, disk graphs simultaneously generalize two important graph classes, unit-disk graphs and planar graphs, both of which have been extensively studied over decades.
Many central problems in algorithmic graph theory have been considered on disk graphs, including \textsc{Vertex Cover} \cite{ErlebachJS05,lokshtanov2023framework,Leeuwen06}, \textsc{Independent Set}~\cite{ErlebachJS05}, \textsc{Maximum Clique}~\cite{bonamy2021eptas}, \textsc{Feedback Vertex Set}~\cite{lokshtanov2022subexponential,lokshtanov2023framework}, \textsc{Dominating Set}~\cite{GibsonP10}, etc.

In this paper, we investigate a fundamental optimization problem on the class of disk graphs, called {\sc Bipartization}.
In this problem, the input is a graph $G$ and the goal is to delete a smallest number of vertices from $G$ such that the resulting graph is bipartite.
There has been a long line of work studying {\sc Bipartization} (e.g., see~\cite{FioriniHRV07,DBLP:conf/soda/KawarabayashiR10,KawarabayashiR10,RautenbachR01,Reed99,ReedSV04} and citations therein).
Observe that the edge counterpart of the  {\sc Bipartization} problem,  {\sc Edge Bipartization}, where we need to find fewest edges whose deletion results in a bipartite graph, is equivalent to the classical {\sc Maximum Cut} problem (which has been studied for over five decades~\cite{DBLP:journals/jacm/GoemansW95,DBLP:journals/jacm/SahniG76}). 
It is known that {\sc Edge Bipartization} (and {\sc Maximum Cut}) reduces to {\sc Bipartization}~\cite{wernicke2014algorithmic}, and both problems find applications in computational biology~\cite{wernicke2014algorithmic,RizziBIL02}, VLSI chip design~\cite{KahngVZ01}, genome sequence assembly~\cite{pop2004hierarchical}, and more.
{\sc Bipartization} is of particular interest also due to the following observation: a graph is bipartite if and only if it does not contain any odd cycle.
As such, it can be formulated as hitting all odd cycles in the graph (using fewest vertices).
For this reason, the {\sc Bipartization} problem is also known as {\sc Odd Cycle Transversal}, and belongs to the family of \textit{cycle-hitting} problems, one of the most well-studied topics in algorithmic graph theory.
Besides \textsc{Bipartization}, other important cycle-hitting problems that have been studied on disk graphs include \textsc{Feedback Vertex Set}, \textsc{Triangle Hitting}, \textsc{Short Cycle Hitting}, etc.

There is no surprise that {\sc Bipartization} is NP-complete.
In fact, it is NP-complete even on graphs of maximum degree $3$ and planar graphs of maximum degree $4$~\cite{ChoiNR89}. 
As such, the study of the problem is mainly in the context of approximation algorithms and parameterized algorithms.
On the parameterized front, it was known that {\sc Bipartization} can be solved in $2^{O(k)} \cdot n^{O(1)}$ time where $k$ is the solution size \cite{KawarabayashiR10,LokshtanovNRRS14,reed2004finding}.
On planar graphs and unit-disk graphs, there exist improved algorithms with running time $k^{O(\sqrt{k})} \cdot n^{O(1)}$ \cite{bandyapadhyay2022subexponential,bandyapadhyay2022true,lokshtanov2012subexponential,marx2022framework}, which are almost tight assuming the Exponential-Time Hypothesis (ETH).
On disk graphs, it was not known whether {\sc Bipartization} admits a subexponential-time parameterized algorithm, until very recently Lokshtanov et al. gave a $k^{O(k^{27/28})} \cdot n^{O(1)}$-time algorithm~\cite{lokshtanov2022subexponential}.

From the perspective of approximation algorithms (which is the focus of this paper), {\sc Bipartization} is one of the trickiest problems in the sense that no polynomial-time approximation scheme (PTAS) was known on any (nontrivial) graph classes, but at the same time no inapproximability results was known except for general graphs.
On general graphs, {\sc Bipartization} cannot admit any (polynomial-time) constant-approximation algorithm assuming the Unique Games Conjecture~\cite{bansal2009optimal}, and the best known approximation ratio is $O(\sqrt{\log {\sf opt}})$ due to Kratsch and Wahlstr\"{o}m~\cite{KratschW20}, improving the earlier bounds of $O(\log n)$~\cite{EvenNRS00} and $O(\sqrt{\log n})$~\cite{AgarwalCMM05}.
It has been a long-standing open question whether {\sc Bipartization} admits a PTAS on planar graphs or unit-disk graphs.
On planar graphs, the currently best approximation for {\sc Bipartization} is still the $\frac{9}{4}$-approximation algorithm given in the seminal work of Goemans and Williamson~\cite{GoemansW98} more than two decades ago.
This result immediately gives a $3$-approximation algorithm for {\sc Bipartization} on disk graphs (and in particular, unit-disk graphs) by the well-known fact that triangle-free disk graphs are planar \cite{kratochvil1996intersection}.

\begin{theorem}[folklore] \label{thm-3appx}
There exists a polynomial-time $3$-approximation algorithm for \textsc{Bipartization} on the class of disk graphs.
\end{theorem}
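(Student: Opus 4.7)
The plan is to combine two classical ingredients: the $\frac{9}{4}$-approximation of Goemans and Williamson for \textsc{Bipartization} on planar graphs, and Kratochv\'il's theorem that every triangle-free disk graph is planar. The bridge between them is a greedy triangle-stripping reduction, and the analysis is a one-line charging argument.

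First I would run a triangle-stripping phase on the input disk graph $G$: while the current graph contains a triangle, pick an arbitrary one, place all three of its vertices into a partial solution $S_1$, and delete them from the graph. After at most $n/3$ iterations the residual graph $G'$ is triangle-free, hence planar by Kratochv\'il's theorem. I would then invoke the Goemans--Williamson procedure on $G'$ to obtain an odd-cycle transversal $S_2$, and output $S_1 \cup S_2$.

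The analysis hinges on the following accounting. Let $t$ be the number of triangles stripped in the first phase, so $|S_1| = 3t$. These triangles are pairwise vertex-disjoint odd cycles, so any optimum solution $S^\ast$ of $G$ must contain at least one vertex from each, giving $|S^\ast \cap S_1| \geq t$ and in particular $|S^\ast| \geq t$. Moreover, $S^\ast \cap V(G')$ is itself an odd-cycle transversal of $G'$ (as $G'-(S^\ast\cap V(G'))$ is an induced subgraph of the bipartite graph $G-S^\ast$), whence
$$
\mathrm{OPT}(G') \;\leq\; |S^\ast| - |S^\ast \cap S_1| \;\leq\; \mathrm{OPT}(G) - t.
$$
Plugging this into the $\tfrac{9}{4}$-bound on $G'$ yields
$$
|S_1| + |S_2| \;\leq\; 3t + \tfrac{9}{4}\bigl(\mathrm{OPT}(G) - t\bigr) \;=\; \tfrac{3}{4}t + \tfrac{9}{4}\mathrm{OPT}(G) \;\leq\; 3\,\mathrm{OPT}(G),
$$
where the final step uses $t \leq \mathrm{OPT}(G)$.

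I do not expect any serious obstacle. The subtlest point is applying the Goemans--Williamson routine to $G'$: it needs a planar graph (not a disk realization) as input, and an explicit planar embedding of $G'$ can be produced in linear time once Kratochv\'il's theorem guarantees planarity, so no geometric data about the original disk graph is required. Triangle detection and the surrounding bookkeeping are standard polynomial-time operations, so the whole algorithm runs in polynomial time and attains approximation ratio $3$.
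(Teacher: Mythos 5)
Your proposal is correct and follows essentially the same route as the paper: greedily strip triangles (taking all three vertices), use Kratochv\'il's theorem that the triangle-free remainder is planar, run the Goemans--Williamson $\tfrac{9}{4}$-approximation there, and charge the optimum against the disjoint triangles and the residual graph. Your explicit arithmetic $3t+\tfrac{9}{4}(\mathrm{OPT}-t)\leq 3\,\mathrm{OPT}$ is just a rephrasing of the paper's charging bound $|S^*|\geq |S|/3+|S'|/\tfrac{9}{4}$, so there is nothing substantively different to flag.
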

\begin{proof}
Let $G$ be the input disk graph.
We repeat the following step until $G$ contains no triangles: find a triangle in $G$, add its three vertices to the solution, and remove them from $G$.
Denote by $S$ the set of vertices added to the solution and by $G' = G-S$ the resulting triangle-free graph, which is planar \cite{kratochvil1996intersection}.
Now apply the algorithm of Goemans and Williamson~\cite{GoemansW98} on $G'$ to obtain a $\frac{9}{4}$-approximation solution $S'$.
Note that $S \cup S'$ is a $3$-approximation solution (for $G$).
Indeed, any solution of \textsc{Bipartization} must hit all triangles in $G$ and thus contains at least $|S|/3$ vertices in $S$.
Also, it contains at least $|S'|/\frac{9}{4}$ vertices in $V(G')$.
So its size is at least $|S|/3 + |S'|/\frac{9}{4} \geq |S \cup S'|/3$.
\end{proof}

For over two decades, this has remained the best approximation algorithm for \textsc{Bipartization} on disk graphs (in fact, even on unit-disk graphs).
Thus, there is a natural question to be asked: can we achieve an approximation ratio better than $3$ for the problem?
Note that one cannot achieve this by improving the approximation ratio $\frac{9}{4}$ for \textsc{Bipartization} on planar graphs.
Indeed, even if we had a PTAS on planar graphs, the above argument still only gives us a $3$-approximation algorithm on disk graphs.
Therefore, the number $3$ here is truly a bottleneck of the approximation ratio of the problem.
In this paper, we break this bottleneck and answer the above question affirmatively by giving the first algorithm for \textsc{Bipartization} on disk graphs with an approximation ratio better than 3.
Specifically, our main result is the following. 

\begin{restatable}{theorem}{main}\label{thm:Main}
There exists a polynomial-time $(3-\alpha)$-approximation algorithm for \textsc{Bipartization} on the class of disk graphs, for some constant $\alpha > 0$.
\end{restatable}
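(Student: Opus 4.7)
The plan is to attack the sole bottleneck in the folklore $3$-approximation of Theorem~\ref{thm-3appx}: the greedy triangle-removal step, which is tight on a disjoint union of triangles (three deletions per triangle versus one in any OCT). Concretely, I would design a polynomial-time subroutine that, given a (pseudo-)disk graph $G$, returns a triangle transversal $S$ together with a vertex-disjoint triangle packing $\mathcal{T}$ of $G[S]$ satisfying $|S|\le(3-\beta)|\mathcal{T}|$ for some absolute constant $\beta>0$. Given such $(S,\mathcal{T})$, run the Goemans-Williamson $\tfrac{9}{4}$-approximation on the (planar) graph $G-S$ to obtain $S'$. Since any OCT solution $O^*$ contains at least one vertex per triangle of $\mathcal{T}$ and restricts to a valid OCT of $G-S$, we get $|O^*|\ge|\mathcal{T}|+|S'|/\tfrac{9}{4}\ge|S|/(3-\beta)+|S'|/\tfrac{9}{4}$, and a direct arithmetic calculation then yields $|S\cup S'|\le(3-\alpha)|O^*|$ for some $\alpha=\alpha(\beta)>0$.

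To construct such $(S,\mathcal{T})$, I would exploit combinatorial consequences of pseudo-disk arrangements that hold without access to a geometric realization -- fractional Helly, bounded union complexity, the $(p,q)$-theorem, and forbidden induced substructures in pseudo-disk intersection graphs. The core claim would be a structural dichotomy: starting from a maximal vertex-disjoint triangle packing $\mathcal{T}_0$ with support $S_0 := V(\mathcal{T}_0)$, if the ratio $|S_0|/|\mathcal{T}_0|=3$ is tight, then $G[S_0]$ (or a bounded neighborhood of it) must contain one of a short list of \emph{improvable configurations}. Natural candidates are a copy of $K_4$ (whose triangle transversal has ratio $\tfrac{3}{2}$), two triangles of $\mathcal{T}_0$ linked by enough outside common neighbors to permit a local swap that merges them, or a ``book'' of triangles sharing a common edge (simultaneously hit by a single vertex). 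Each case is detectable in polynomial time and yields a strict, uniform improvement to the ratio, so an iterative local-repair loop drives $(S,\mathcal{T})$ to ratio at most $3-\beta$.

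The principal obstacle is establishing the structural dichotomy itself. On arbitrary graphs the $3$-approximation for triangle transversal is tight, so the argument must genuinely use the pseudo-disk property; and because the algorithm is required to be robust, it must do so through purely combinatorial signatures of pseudo-disk arrangements rather than any reconstructed realization. I expect the bulk of the technical work to lie in (i) enumerating the correct list of improvable configurations, (ii) proving that pseudo-disk graphs necessarily expose at least one of them at every tight instance -- likely via a Helly-type or $K_{2,3}$-freeness argument in the associated neighborhood hypergraph -- and (iii) amortizing the per-step savings so that the final ratio falls below $3$ by a uniform constant $\beta$ independent of $n$, which then propagates through the Goemans-Williamson combination to the target constant $\alpha>0$.
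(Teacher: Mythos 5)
Your reduction rests on a subroutine that cannot exist. You ask for a triangle transversal $S$ together with a vertex-disjoint triangle packing $\mathcal{T}$ of $G[S]$ with $|S|\le(3-\beta)|\mathcal{T}|$ for an absolute constant $\beta>0$. Take $G$ to be a disjoint union of $t$ triangles (a unit-disk graph, hence a disk and pseudo-disk graph). Any packing $\mathcal{T}$ whose triangles lie inside $G[S]$ consists of $k\le t$ of the components with all three vertices in $S$, and $S$ must still meet each of the remaining $t-k$ components, so $|S|\ge 3k+(t-k)$; the requirement $|S|\le(3-\beta)k$ then forces $t\le(1-\beta)k\le(1-\beta)t$, a contradiction for every $t\ge 1$. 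Note the packing must live inside $S$ for your charging to be valid: if a triangle of $\mathcal{T}$ had a vertex outside $S$, an optimal OCT $O^*$ could hit it there, and that vertex would be double-counted against the bound $|O^*\setminus S|\ge|S'|/\frac94$. The same example refutes your structural dichotomy: a tight instance ($|S_0|=3|\mathcal{T}_0|$) need not contain a $K_4$, a book, or any pair of interacting triangles, so there is no ``improvable configuration'' to find, and no amount of fractional Helly or union-complexity machinery changes this, since the obstruction is already present for disjoint unit disks. More generally, any scheme of the form ``delete a triangle transversal $S$, then run Goemans--Williamson on $G-S$'' needs a lower bound on $|O^*\cap S|$, and such a bound simply may not exist (e.g.\ $O^*$ can avoid $S$ entirely when triangles admit several transversal vertices), which is exactly why improving the triangle-hitting step in isolation cannot break the factor $3$.

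The paper's proof takes a genuinely different route that you would need to rediscover: it first reduces to $K_4$-free disk graphs via a $K_4$-packing (Lemma~\ref{lem:k4reduction}), proves that such graphs are $11$-degenerate (Lemma~\ref{lem-degen}), and then computes three candidate solutions whose ratios are balanced against two parameters $a=|\mathcal{T}|/\mathsf{opt}$ and $b=\mathsf{tri}(\mathcal{O})/\mathsf{opt}$, where $\mathcal{O}$ are the triangles not contained in $V(\mathcal{T})$ --- a case your proposal never confronts. The solution $S_1$ is the folklore one (good when $a$ is small); the gain comes from $S_2$, which deletes only two (randomly chosen) vertices per packing triangle and then re-packs, using $K_4$-freeness, planarity of triangle-free neighborhoods, and bounded average degree to show that many surviving vertices are ``dead'' and hence the second packing is strictly smaller (Observation~\ref{obs-dead}, Lemma~\ref{lem-S2}); and $S_3$ handles the regime where both $a$ and $b$ are large by recursing on $G$ minus the at most two vertices per triangle of a maximal packing of $\mathcal{O}$ that lie in $V(\mathcal{T})$ (Lemma~\ref{lem-S3}). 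If you want to pursue your local-improvement idea, you would at minimum have to replace the packing-inside-$S$ certificate by an accounting that, like the paper's, compares several deletion strategies against a common pair of lower-bound parameters rather than trying to certify a better-than-$3$ ratio for the triangle-hitting phase alone.
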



We remark that Theorem~\ref{thm:Main} should be viewed as a  proof of concept, rather than the quantitative improvement. 
Our algorithm in Theorem~\ref{thm:Main} is \textit{robust} in the sense that it does not require the geometric realization of the input disk graph to be given.
Furthermore, our algorithm directly generalizes to the broader class of \textit{pseudo-disk} graphs, which are the intersection graphs of topological disks in which the boundaries of every two of them intersect at most twice.
(Note that the 3-approximation algorithm in Theorem~\ref{thm-3appx} also works for pseudo-disk graphs as triangle-free pseudo-disk graphs are planar \cite{kratochvil1996intersection}.)
Again, this generalized algorithm is robust.

\begin{restatable}{theorem}{mainpseudo}\label{thm:Main2}
There exists a polynomial-time $(3-\alpha)$-approximation algorithm for \textsc{Bipartization} on the class of pseudo-disk graphs, for some constant $\alpha > 0$.
\end{restatable}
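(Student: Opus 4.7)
The plan is to strengthen Theorem~\ref{thm-3appx} by replacing its na\"ive triangle-hitting step with a procedure that exploits the pseudo-disk promise. The folklore bound is tight precisely when two conditions hold simultaneously: (i) the greedy vertex-disjoint triangle packing $\mathcal{T}$ used in the first phase has size as large as $|\textsc{opt}|$, and (ii) every triangle in $\mathcal{T}$ contains exactly one vertex of some optimal bipartization $\textsc{opt}$. The goal is to use the pseudo-disk structure to rule out this tight regime.

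Concretely, I would first compute a maximal family $\mathcal{T}$ of vertex-disjoint triangles in $G$ and set $T := V(\mathcal{T})$; by maximality, $G-T$ is triangle-free, hence planar by Kratochv\'il's theorem, so Goemans--Williamson's $9/4$-approximation applies on $G-T$. Next, I would classify each $\tau \in \mathcal{T}$ as \emph{local-dense} if it lies in a small substructure (such as a $K_4$, or a pair of triangles sharing an edge, or a ``fan'' of triangles through a common vertex) that forces any bipartization to include at least two vertices in that substructure rather than one. For local-dense triangles, I would replace the three vertices by a refined transversal using at most two vertices per triangle, saving a constant factor against $\textsc{opt}$. The key structural claim is a dichotomy for pseudo-disk graphs: either a constant fraction of $\mathcal{T}$ is local-dense (yielding the savings above), or else the pseudo-disk promise forces a global ``planar-like'' structure on $G$ that allows either a direct GW-style primal-dual attack with ratio strictly less than $3$, or a peeling argument in which successive applications of Kratochv\'il's planarity theorem cover most of $G$. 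A balancing argument combining the per-triangle savings with the $9/4$-factor on the triangle-free residual then yields total cost $(3-\alpha)\,\textsc{opt}$ for some absolute constant $\alpha > 0$.

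The main obstacle is establishing the structural dichotomy in a form that is both (a)~\emph{robust}, meaning detectable without a geometric realization, and (b)~quantitatively strong enough to certify a constant $\alpha > 0$. Since the algorithm is required to be robust, this step may invoke only the combinatorial characterization of pseudo-disk graphs --- primarily Kratochv\'il's planarity theorem for triangle-free pseudo-disk graphs, applied iteratively to carefully chosen residual subgraphs, together with standard forbidden-pattern arguments that the pseudo-disk promise permits (for instance, impossibility of certain ``crossing'' triangle configurations). In the spirit of the ``proof of concept'' emphasized by the authors, one only needs to show that this dichotomy is nontrivial; extracting an explicit or tight value of $\alpha$ is not required and is left outside the scope of this approach.
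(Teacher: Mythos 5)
There is a genuine gap: your ``key structural claim'' --- the dichotomy that either a constant fraction of the packed triangles is local-dense or the pseudo-disk promise forces a global planar-like structure amenable to a sub-3 primal-dual or peeling argument --- is asserted but never argued, and it is exactly where the difficulty of the theorem lives. Worse, the hard regime for beating ratio $3$ is precisely the regime your classification cannot help with: when the triangles of the maximal packing $\mathcal{T}$ are essentially vertex-disjoint and loosely interconnected, an optimal OCT may need only one vertex per triangle, so there are no $K_4$'s, shared-edge pairs, or fans forcing two deletions, i.e.\ \emph{no} local-dense triangles at all. In that case your fallback branch must carry the entire proof, but no mechanism is offered for it: a pseudo-disk graph containing many pairwise disjoint triangles is not triangle-free, so Kratochv\'il's theorem does not apply to $G$ itself, and ``iterating'' it on residual subgraphs is exactly the folklore argument of Theorem~\ref{thm-3appx}, which provably cannot give better than $3$ no matter how good the planar subroutine is. Saying that only nontriviality of the dichotomy is needed does not help; the dichotomy itself is the missing proof.

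For comparison, the paper resolves this regime with a different idea that your sketch does not contain: in each packed triangle one vertex is retained (randomly, or via the deterministic distance-$3$ independent set construction of Appendix~\ref{sec-derandom}) and the other two are deleted, and one shows that a constant fraction of the retained vertices become \emph{dead}, i.e.\ lie in no triangle of the residual graph, so the second-phase triangle packing shrinks by a constant factor (Observation~\ref{obs-dead}, Lemma~\ref{lem-S2}). This step hinges on the second structural property you never invoke: after the $K_4$-removal preprocessing (Lemma~\ref{lem:k4reduction}), $K_4$-free pseudo-disk graphs have constant degeneracy, which for pseudo-disks is proved by bounding inclusion edges and shallow boundary-intersection points --- a fact that requires its own argument and is not a consequence of Kratochv\'il's theorem. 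Triangles straddling $V(\mathcal{T})$ and its complement are handled by a third, recursive solution $S_3$ (Lemma~\ref{lem-S3}), and a three-way balancing in the parameters $a=|\mathcal{T}|/\mathsf{opt}$ and $b=\mathsf{tri}(\mathcal{O})/\mathsf{opt}$ yields the constant $\alpha>0$. Without a substitute for the dead-vertex/degeneracy mechanism (or a proof of your dichotomy), your proposal does not establish the theorem.
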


Finally, we observe that our technique not only works for the \textsc{Bipartization} problem.
In fact, it can be applied to a large category of vertex-deletion problems on (pseudo-)disk graphs, resulting in $(3-\alpha)$-approximation algorithms.
Although at this point it only yields interesting results for \textsc{Bipartization} (mainly because most well-studied problems in the category already have algorithms on disk graphs with approximation ratio better than 3), this demonstrates that our technique could possibly have further applications in the future.
We shall briefly discuss this part in Section~\ref{sec-generalization}.



\paragraph{Other related work.}
NP-complete optimization problems on disk graphs and other geometric intersection graphs have received considerable attention over years.
Here we only summarize some recent work on this topic.
The work of de Berg et al.~\cite{DBLP:journals/siamcomp/BergBKMZ20} gave a framework for designing ETH-tight exact algorithms on (unit-)disk graphs or more generally (unit-)ball graphs, which works for a variety of classical optimization problems.
Bonamy et al.~\cite{bonamy2021eptas} presented the first EPTAS and subexponential-time algorithm for \textsc{Maximum Clique} on disk graphs.
Fomin et al.~\cite{FominLPSZ19} designed almost ETH-tight parameterized algorithms for various cycle-packing and cycle-hitting problems on unit-disk graphs; in a follow-up paper~\cite{ZehaviFLP021}, the same authors improved some of their algorithms to be ETH-tight.
Recently, Lokshtanov et al.~\cite{lokshtanov2022subexponential,lokshtanov2023framework} proposed frameworks for subexponential parameterized algorithms and EPTASes for various vertex-deletion problems on disk graphs (the framework for EPTASes does not work for {\sc Bipartization}, while the one for subexponential parameterized algorithms works).
A very recent work by the same authors~\cite{lokshtanov20241} gave a 1.9999-approximation for \textsc{Vertex Cover} on string graphs (i.e., intersection graphs of arbitrary connected geometric objects in the plane), which has the same flavor as this paper.

Besides the aforementioned algorithmic research on {\sc Bipartization}, the problem was also studied in the context of kernelization complexity.
The seminal work by Kratsch and Wahlstr{\"{o}}m~\cite{kratsch2014compression} showed that {\sc Bipartization} admits a randomized polynomial kernel with respect to $k$.
Later, for planar graphs, it was shown to admit a deterministic polynomial kernel by Jansen et al.~\cite{jansen2021deterministic}.
Moreover, the kernelization complexity of {\sc Bipartization} was studied also with respect to some structural parameterizations~\cite{jansen2011polynomial}. 

On a related note, we remark that structural properties of odd cycles in a graph has also received significant attention from various combinatorial points of view.
While the survey of these results is beyond the scope of this paper, as an illustrative example, let us mention the study of Erdős–Pósa properties for odd cycles (see e.g., ~\cite{FioriniHRV07,kawarabayashi2007erdHos,rautenbach2001erdos,thomassen2001erdHos}).

\section{Preliminaries}

Let $G$ be a graph.
We use $V(G)$ and $E(G)$ to denote the vertex set and edge set of $G$, respectively.
For a subset $V \subseteq V(G)$, denote by $G[V]$ the subgraph of $G$ induced by $V$, and by $G - V$ the subgraph of $G$ induced by $V(G) \backslash V$. For a vertex $v$, we use $N_G(v)$ to denote the set  $\{x\in V(G)\setminus \{v\}~\colon~ (x,v)\in E(G)\}$. For a vertex subset $S$, we use $N_G(S)$ and $N_G[S]$ to denote the sets $\bigcup_{z\in S} N_G(z)\setminus S$ and $S\cup N_{G}(S)$, respectively. 
For a vertex $v$ in $G$, we use $d_G(v)$ to denote the degree of $v$ (i.e., $|N_G(v)|$) in $G$. 
A vertex subset $I \subseteq V(G)$ is a {\em distance-$d$ independent} set in $G$, if for any two distinct vertices $x$ and $y$ in $I$, the distance between $x$ and $y$ in $G$ is strictly more than $d$. Here, the distance between two vertices is the number of edges in a shortest path between those vertices.
Let $\mathcal{S}$ be a collection of subsets of $V(G)$.
A \textit{packing} of $\mathcal{S}$ is a sub-collection $\mathcal{S}'$ such that $S \cap S' = \emptyset$ for all $S,S' \in \mathcal{S}'$ with $S \neq S'$.
We say a packing $\mathcal{S}' \subseteq \mathcal{S}$ is \textit{maximal} if any $\mathcal{S}'' \subseteq \mathcal{S}$ satisfying $\mathcal{S}' \subsetneq \mathcal{S}''$ is not a packing of $\mathcal{S}$, and is \textit{maximum} if any subset $\mathcal{S}'' \subseteq \mathcal{S}$ satisfying $|\mathcal{S}'| < |\mathcal{S}''|$ is not a packing of $\mathcal{S}$.
Any maximum packing of $\mathcal{S}$ has the same size.

An \textit{odd cycle transversal} (or \textit{OCT} for short) of $G$ is a subset $S \subseteq V(G)$ such that $G-S$ is a bipartite graph.
A \textit{triangle} in $G$ refers to a set $T = \{u,v,w\}$ of three vertices of $G$ such that $(u,v),(v,w),(w,u) \in E(G)$.
We use $\Delta(G)$ to denote the family of all triangles in $G$.
For a set $\mathcal{T}$ of triangles in $G$, we denote by $V(\mathcal{T})$ the set of vertices of all triangles in $\mathcal{T}$, i.e., $V(\mathcal{T}) = \bigcup_{T \in \mathcal{T}} T$.
The notation $\mathsf{tri}(\mathcal{T})$ denotes the size of a maximum packing of $\mathcal{T}$.

\section{Our algorithm}

In this section, we present our approximation algorithm for \textsc{Bipartization} on disk graphs.
Our algorithm first takes a simple preprocessing step, which reduces the general problem to the problem on $K_4$-free disk graphs (i.e., disk graphs without cliques of size 4).
Then in the main part of our algorithm, we solve \textsc{Bipartization} on $K_4$-free disk graphs.
For a cleaner exposition, we shall first give a randomized version of our algorithm, as it is more intuitive and yields a better approximation ratio.
We then show how to derandomize it in Appendix~\ref{sec-derandom}.

\subsection{Preprocessing: reducing to the $K_4$-free case}

For a graph $G$, let $K_4(G)$ be the set of all $K_4$'s in $G$.
The following lemma allows us to reduce the problem to \textsc{Bipartization} on $K_4$-free disk graphs.

\begin{lemma}
\label{lem:k4reduction}
Let $G$ be a graph and ${\cal C}$ be a packing of $K_4(G)$. Let $S$ be a $\rho$-approximation solution for \textsc{Bipartization} on $G-V({\cal C})$. Then, $S\cup V({\cal C})$ is a $\max\{2,\rho\}$-approximation solution for \textsc{Bipartization} on $G$.
\end{lemma}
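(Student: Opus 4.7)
The plan is to split an optimal OCT of $G$ into the part that lies inside $V(\mathcal{C})$ and the part that lies outside, and to charge $V(\mathcal{C})$ against the first piece while charging the approximation output $S$ against the second. Feasibility of $S\cup V(\mathcal{C})$ is immediate from the definitions: $G-(S\cup V(\mathcal{C}))=(G-V(\mathcal{C}))-S$ is bipartite by assumption on $S$, so $S\cup V(\mathcal{C})$ is an OCT of $G$. Let $O^*$ be an optimal OCT of $G$ and write $O^*_C=O^*\cap V(\mathcal{C})$ and $O^*_R=O^*\setminus V(\mathcal{C})$, so $|O^*|=|O^*_C|+|O^*_R|$.

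The main combinatorial observation I would use is that $K_4$ remains non-bipartite after the deletion of any single vertex (a triangle survives), so any OCT of $G$ must take at least two vertices from each $K_4\in\mathcal{C}$. Since $\mathcal{C}$ is a packing, its members are vertex-disjoint, which gives $|O^*_C|\geq 2|\mathcal{C}|$, and hence
\[
|V(\mathcal{C})|=4|\mathcal{C}|\leq 2|O^*_C|.
\]
Next I would observe that $O^*_R$ is itself an OCT of $G-V(\mathcal{C})$: the graph $(G-V(\mathcal{C}))-O^*_R$ is an induced subgraph of $G-O^*$ and therefore bipartite. Combining with the $\rho$-approximation guarantee,
\[
|S|\leq \rho\cdot |O^*_R|=\rho\bigl(|O^*|-|O^*_C|\bigr).
\]
Adding the two bounds yields
\[
|S\cup V(\mathcal{C})|\leq |S|+|V(\mathcal{C})|\leq \rho\cdot |O^*|+(2-\rho)\,|O^*_C|.
\]
A case split on the sign of $2-\rho$ finishes the argument: if $\rho\geq 2$, the last term is non-positive and the bound collapses to $\rho\cdot|O^*|$; if $\rho<2$, using $|O^*_C|\leq |O^*|$ gives $|S\cup V(\mathcal{C})|\leq 2|O^*|$. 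In either case the bound is $\max\{2,\rho\}\cdot|O^*|$.

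The only real obstacle is avoiding a trivial $(\rho+2)$-approximation from the naive estimate $|S|+|V(\mathcal{C})|\leq \rho|O^*|+2|O^*|$. The fix is the coupling between the two charges: the portion of $O^*$ that pays for $V(\mathcal{C})$ lives inside $V(\mathcal{C})$ and is therefore subtracted from the residual OCT instance that $S$ must compete against. This shared use of the budget $|O^*|$, rather than double-counting it, is exactly what converts $\rho+2$ into $\max\{2,\rho\}$.
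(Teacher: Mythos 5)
Your proof is correct and follows essentially the same route as the paper: both decompose the optimal OCT into its part inside $V(\mathcal{C})$ (charged at rate 2 via the ``two vertices per $K_4$'' observation) and its part outside (which is a feasible OCT of $G-V(\mathcal{C})$, charged at rate $\rho$). The only difference is cosmetic: the paper packages the final algebra as $\frac{|V(\mathcal{C})|}{2}+\frac{|S|}{\rho}\geq\frac{|S|+|V(\mathcal{C})|}{\max\{2,\rho\}}$, whereas you do an equivalent case split on whether $\rho\geq 2$.
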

\begin{proof}
First, it is clear that $S\cup V({\cal C})$ is an OCT of $G$, since $G - S\cup V({\cal C}) = (G-V({\cal C})) - S$ contains no odd cycles.
Let $S^*$ be an optimal OCT of $G$. 
For every $K_4$ in ${\cal C}$, $S^*$ must contain at least $2$ vertices in the $K_4$ (for otherwise $G-S^*$ contains a triangle).
Since the $K_4$'s in ${\cal C}$ are disjoint, we have $|S^* \cap V(\mathcal{C})| \geq 2|\mathcal{C}| = |V(\mathcal{C})|/2$.
Furthermore, $S^* \cap (V(G) \backslash V(\mathcal{C}))$ is an OCT of $G - V(\mathcal{C})$, which implies $|S^* \cap (V(G) \backslash V(\mathcal{C}))| \geq |S|/\rho$.
Therefore,
\begin{equation*}
    |S^*| = |S^* \cap V(\mathcal{C})| + |S^* \cap (V(G) \backslash V(\mathcal{C}))| \geq \frac{|V(\mathcal{C})|}{2} + \frac{|S|}{\rho} \geq \frac{|S|+|V(\mathcal{C})|}{\max\{2,\rho\}}.
\end{equation*}
As $|S \cup V(\mathcal{C})| = |S|+|V(\mathcal{C})|$, we have $|S \cup V(\mathcal{C})| \leq \max\{2,\rho\} \cdot |S^*|$.
\end{proof}

One reason for why this reduction helps us is the degeneracy of a $K_4$-free disk graph.
Recall that a graph $G$ is $c$-\textit{degenerate} if we can sort its vertices as $v_1,\dots,v_n$ such that each $v_i$ is neighboring to at most $c$ vertices in $\{v_1,\dots,v_{i-1}\}$.
A $c$-degenerate graph of $n$ vertices has at most $cn$ edges and thus has average degree at most $2c$.
We prove that every $K_4$-free disk graph is $11$-degenerate.
To prove this we use the following known result.

\begin{lemma}[\cite{Marathe1995SimpleHF}]
\label{lem:kissingnumber}
Let $D$ be a disk of radius $r$. Let ${\cal S}$ be a set of pairwise non-overlapping  disks of radius $r$ such that every disk in ${\cal S}$ intersects with $D$. Then, $\vert {\cal S}\vert\leq 5$.  
\end{lemma}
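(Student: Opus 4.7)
The plan is to prove the bound by a pigeonhole-plus-law-of-cosines argument, which is the standard route to planar kissing-number statements. Let $c_0$ denote the centre of $D$ and $c_1,\dots,c_k$ the centres of the disks in $\mathcal{S}$, where $k=|\mathcal{S}|$. The hypothesis that every disk in $\mathcal{S}$ intersects $D$ gives $\|c_i-c_0\|\le 2r$ for every $i$, while the hypothesis that the disks in $\mathcal{S}$ are pairwise non-overlapping, read in the standard strict sense as pairwise disjoint, gives $\|c_i-c_j\|>2r$ for every $i\neq j$.

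Next I would suppose for contradiction that $k\ge 6$ and consider the $k$ rays emanating from $c_0$ and passing through $c_1,\dots,c_k$. These rays partition the plane into $k$ angular sectors of total measure $2\pi$, so by the pigeonhole principle there exist distinct indices $i,j$ with $\theta:=\angle c_i c_0 c_j\le \pi/3$. Relabelling so that $a:=\|c_i-c_0\|\ge b:=\|c_j-c_0\|$, the law of cosines together with $\cos\theta\ge 1/2$ and $b\le a$ gives
\begin{equation*}
\|c_i-c_j\|^2 \;=\; a^2+b^2-2ab\cos\theta \;\le\; a^2+b^2-ab \;=\; a^2+b(b-a)\;\le\; a^2\;\le\; (2r)^2,
\end{equation*}
so $\|c_i-c_j\|\le 2r$. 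This contradicts the strict separation $\|c_i-c_j\|>2r$ coming from the non-overlap hypothesis, yielding $k\le 5$.

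The main obstacle is the equality/boundary configuration in which every inequality in the chain above becomes an equality: this forces $a=b=2r$ and $\theta=\pi/3$, placing six centres at the vertices of a regular hexagon of side $2r$ around $c_0$, with every disk in $\mathcal{S}$ tangent to $D$ and to its two hexagonal neighbours. It is precisely the strict reading of "non-overlapping" (i.e., pairwise disjoint, not merely pairwise interior-disjoint) that rules out this tangent-hexagon configuration and yields the bound $5$ rather than $6$. I expect the whole argument to be completely elementary and to fit in a few lines, with the angular pigeonhole and the cosine estimate being the only real content.
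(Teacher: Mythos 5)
Your argument is correct, and it is the standard angular pigeonhole proof of this classical fact; the paper itself does not prove Lemma~\ref{lem:kissingnumber} but cites it from Marathe et al., so there is no in-paper proof to compare against. Your chain $\|c_i-c_j\|^2\le a^2+b^2-ab\le a^2\le (2r)^2$ under $\theta\le\pi/3$ is sound, and the degenerate case $c_i=c_0$ (where the ray is undefined) is harmless since it immediately gives $\|c_i-c_j\|\le 2r$ for any other $j$. Your caveat about the reading of ``non-overlapping'' is exactly right and is consistent with how the lemma is applied in Lemma~\ref{lem-degen}: there the six disks $D'_{v_1},D'_{v_3},\dots,D'_{v_{11}}$ are pairwise non-intersecting (any intersection, including tangency, would create a $K_4$), so the strict separation $\|c_i-c_j\|>2r$ that your proof needs is indeed available, and the bound $5$ (rather than $6$, which the tangent hexagon would achieve under mere interior-disjointness) is the one required.
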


\begin{lemma} \label{lem-degen}
Every $K_4$-free disk graph is $11$-degenerate.
\end{lemma}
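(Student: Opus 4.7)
The plan is to prove Lemma \ref{lem-degen} by showing that every $K_4$-free disk graph contains a vertex of degree at most $11$; since induced subgraphs of a disk graph are themselves disk graphs and $K_4$-freeness is inherited, iteratively removing such a vertex produces an $11$-degenerate ordering. The candidate vertex is one whose disk $D_v$ has the smallest radius $r$ in the realization. The key structural observation is that $G[N(v)]$ is triangle-free: a triangle among the neighbors of $v$ would combine with $v$ into a $K_4$.

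The geometric core of the argument is the following adjacency claim: for any two neighbors $u_1, u_2 \in N(v)$ with $c_{u_i} \ne c_v$, writing $\theta_{u_i}$ for the angular position of the vector $c_{u_i} - c_v$, if the angular distance between $\theta_{u_1}$ and $\theta_{u_2}$ is at most $60^\circ$, then $u_1 u_2 \in E(G)$. To prove this I would apply the law of cosines: with $a_i := |c_{u_i} - c_v|$,
\[
|c_{u_1} - c_{u_2}|^2 \;=\; a_1^2 + a_2^2 - 2 a_1 a_2 \cos \theta \;\le\; a_1^2 + a_2^2 - a_1 a_2,
\]
where the inequality uses $\cos\theta \ge 1/2$. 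The constraints are $r_{u_i} \ge r$ (since $D_v$ is smallest) and $a_i \le r + r_{u_i}$ (since $D_{u_i}$ intersects $D_v$). The function $(a_1, a_2) \mapsto a_1^2 + a_2^2 - a_1 a_2$ is strictly convex on the rectangle $[0, r + r_{u_1}] \times [0, r + r_{u_2}]$, so its maximum sits at a corner; checking the four corners, each of the resulting bounds reduces to an elementary inequality (the binding one being $r^2 + r(r_{u_1} + r_{u_2}) \le 3 r_{u_1} r_{u_2}$, which holds termwise because $r \le r_{u_1}, r_{u_2}$). Thus $|c_{u_1} - c_{u_2}| \le r_{u_1} + r_{u_2}$ and $D_{u_1}$ meets $D_{u_2}$.

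Given the adjacency claim, triangle-freeness of $G[N(v)]$ forbids three neighbors from having their angular positions inside a common arc of length $60^\circ$. Sorting the $k := |N(v)|$ angles cyclically as $\theta_1 \le \dots \le \theta_k$ (with $\theta_{k+1} := \theta_1 + 2\pi$, $\theta_{k+2} := \theta_2 + 2\pi$), this translates to $\theta_{i+2} - \theta_i > 60^\circ$ for every $i \in \{1,\dots,k\}$. Summing these $k$ inequalities produces a telescoping sum equal to $2 \cdot 360^\circ$ on the left, so $720^\circ > 60^\circ \cdot k$, i.e., $k \le 11$.

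A single degenerate case still needs separate treatment: a neighbor $u$ with $c_u = c_v$ has no well-defined angle. But in that case $D_v \subseteq D_u$, so $u$ is adjacent to every other vertex of $N(v)$ (each of their disks meets $D_v \subseteq D_u$); triangle-freeness then forces $N(v) \setminus \{u\}$ to be an independent set of $G$. Shrinking each remaining neighbor's disk to a radius-$r$ sub-disk still intersecting $D_v$ and invoking Lemma \ref{lem:kissingnumber} bounds $|N(v) \setminus \{u\}| \le 5$, so $|N(v)| \le 6$ in this case. The main obstacle I anticipate is the corner-case verification of the convex-maximum inequality above: the final constant $11$ is tied to the $60^\circ$ threshold, which is tight precisely at the extreme configuration $a_1 = a_2 = 2r$, $r_{u_1} = r_{u_2} = r$ (six equal disks kissing $D_v$ at angular gaps of exactly $60^\circ$), and any slack in this calculation propagates directly to the degeneracy bound.
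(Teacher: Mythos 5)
Your proof is correct, but it takes a genuinely different route from the paper's. The paper also starts from a vertex whose disk has the smallest radius $r$, but then shrinks every neighboring disk to a radius-$r$ sub-disk that still meets the small disk, turning the closed neighborhood into a unit-disk graph; ordering the neighbors angularly, it argues that if there were $12$ of them, the six odd-indexed shrunk disks would have to be pairwise non-overlapping (an intersection between two of them would create a $K_4$ through a disk lying angularly between them), contradicting the kissing-number bound of Lemma~\ref{lem:kissingnumber}. You instead work with the original disks and prove a quantitative adjacency criterion: two neighbors whose centers subtend an angle of at most $60^\circ$ at $c_v$ must be adjacent, via the law of cosines and a convexity/corner check that uses only $a_i \le r + r_{u_i}$ and $r \le r_{u_i}$ (your corner verification, reducing to $r^2 + r(r_{u_1}+r_{u_2}) \le 3 r_{u_1} r_{u_2}$, is correct, and maximizing over the rectangle is legitimate since it contains the feasible region). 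Combining this with triangle-freeness of $G[N(v)]$ and the telescoping count $\sum_{i=1}^{k}(\theta_{i+2}-\theta_i) = 720^\circ$ gives $k \le 11$, and you invoke Lemma~\ref{lem:kissingnumber} only in the coincident-center side case, which you treat explicitly. What each approach buys: the paper's argument is shorter modulo the known kissing-number lemma, but its key geometric step (``$D'_{v_{i+1}}$ or $D'_{v_{j+1}}$ intersects both $D'_{v_i}$ and $D'_{v_j}$'') is left to intuition and its angular ordering silently assumes distinct centers, whereas your metric computation makes the $60^\circ$ threshold explicit, is essentially self-contained, and covers the degenerate case $c_u = c_v$; both arguments yield the same bound of $11$ and both pass from minimum degree to degeneracy via heredity of $K_4$-free disk graphs, which you state explicitly.
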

\begin{proof}
To prove the lemma, it is enough to prove that for any $K_4$-free disk graph, there is a vertex of degree at most $11$. Let $G$ be a $K_4$-free disk graph with a realization $\mathcal{D}$, and let $D_v \in \mathcal{D}$ be the disk representing the vertex $v \in V(G)$. Let $u$ be a vertex in $G$ such that disk $D_u$ has the smallest radius among the disks in $\mathcal{D}$. We will prove that $d_G(u) \leq 11$.  Let $r$ be the radius of $D_u$. Notice that for each $v\in N_G(u)$, the radius of $D_v$ is at least $r$. Now we construct a graph $H$ with vertex set $N_G[u]$ such that $H$ is a unit disk graph, $H$ is a subgraph of $G$ (and hence $K_4$-free), and $d_G(u)=d_{H}(u)$. The construction of $H$ is as follows. For each $v\in N_G(u)$, construct a disk $D'_{v}$ of radius $r$ which is fully contained in the disk $D_v$ and intersects $D_u$. The graph $H$ is the geometric intersection graph of ${\cal D'}=\{D_u\}\cup \{D'_v~\colon v\in N_G(u)\}$. It is easy to see that $H$ is a unit disk graph, $H$ is a subgraph of $G$ and $d_G(u)=d_{H}(u)$. For each $v\in V(H)\setminus \{u\}$, let $L_{v}$ be the line segment between the centers of $D_u$ and $D'_v$.  
Let $\{v_1,\ldots v_t\}=N_G(u)$ such that the line segments $L_{v_1},L_{v_2},\ldots,L_{v_t}$ are in the clockwise order. We claim that $t\leq 11$. For the sake of contradiction assume that $t\geq 12$. Suppose  there exists two distinct $i,j\in \{1,3,5,7,9,11\}$ such that  $D'_{v_{i}}$ intersects with $D'_{v_{j}}$. This implies that 
 $D'_{v_{i+1}}$ or $D'_{v_{j+1}}$ intersects both $D'_{v_{i}}$ and $D'_{v_{j}}$. Let $w\in \{v_{i+1},v_{j+1}\}$ be the vertex such that $D'_{w}$ intersects both $D'_{v_{i}}$ and $D'_{v_{j}}$. Then, $H[\{u,w,v_i,v_j\}]$ is a complete graph on $4$ vertices, which is a contradiction because $H$ is $K_4$-free. Then, the disks $D'_{v_1},D'_{v_3},D'_{v_5},D'_{v_7},D'_{v_9}, D'_{v_{11}}$ are pairwise non-overlapping, which is a contradiction to Lemma~\ref{lem:kissingnumber}.  
Thus, we proved that $t\leq 11$ and hence $d_H(v)=d_G(v)\leq 11$. 
That is, the degeneracy of $G$ is at most $11$. 
\end{proof}

\subsection{The main algorithm}

Our main algorithm for \textsc{Bipartization} on $K_4$-free disk graphs is presented in Algorithm~\ref{alg-oct}.
At the beginning, it takes an arbitrary maximal triangle packing $\mathcal{T}$ of $G$ (line~1) and defines $\mathcal{O}$ as the triangles in $G$ that have at least one vertex outside $V(\mathcal{T})$.
In a high-level, our algorithm computes three different solutions $S_1,S_2,S_3$ and returns the best one.

The first solution $S_1$ is computed in exactly the same way as the 3-approximation algorithm described in the introduction.
Specifically, we include in $S_1$ all vertices in the triangle packing $\mathcal{T}$, and an OCT $X_1$ of $G - V(\mathcal{T})$ computed by a sub-routine \textsc{PlanarBip} (line~3), which is an algorithm for \textsc{Bipartization} on planar graphs.

The second solution $S_2$ is constructed in a more involved way.
First, in each triangle $T \in \mathcal{T}$, we randomly sample one vertex $v_T \in T$ (line~6); here the function $\mathsf{random}(T)$ returns each vertex in $T$ with probability $\frac{1}{3}$ and different calls of $\mathsf{random}$ are independent.
Let $R$ be the vertices in $V(\mathcal{T})$ \textit{not} sampled (line~7).
Then we include in $S_2$ all vertices in $R$, all vertices in an arbitrary maximal triangle packing $\mathcal{T'}$ of $G - R$ (line~8), and an OCT $X_2$ of $G - (R \cup V(\mathcal{T}'))$ computed by \textsc{PlanarBip}.

If $\mathcal{O}$ is nonempty, we need to construct our third solution $S_3$.
We take a maximal packing $\mathcal{T}''$ of the triangles in $\mathcal{O}$ (line~12).
Then we include in $S_3$ all vertices in $V(\mathcal{T}'') \cap V(\mathcal{T})$ and an OCT $X_3$ of $G-(V(\mathcal{T}'') \cap V(\mathcal{T}))$ recursively computed by our algorithm (line~13).

Finally, we return the best one among $S_1,S_2,S_3$ (line~15); here $\min\{S_1,S_2,S_3\}$ returns the set of smallest size among $S_1,S_2,S_3$.
If $S_3$ is not computed, we simply return $\min\{S_1,S_2\}$.
It is obvious that each of $S_1,S_2,S_3$ is an OCT of $G$ and thus the algorithm always returns a correct solution.
The quality of the solution obtained will be analyzed in the next section.
Also, one can easily see that Algorithm~\ref{alg-oct} runs in polynomial time.
Indeed, except the recursive call of $\textsc{OCT}$ in line~13, all the other steps can be done in polynomial time.
Line~13 will only be executed when $\mathcal{O} \neq \emptyset$.
In this case, $\mathcal{T}'' \neq \emptyset$ and $V(\mathcal{T}'') \neq \emptyset$.
Thus, the graph $G-(V(\mathcal{T}'') \cap V(\mathcal{T}))$ has at most $n-1$ vertices where $n = V(G)$, and the running time of Algorithm~\ref{alg-oct} satisfies the recurrence $T(n) \leq T(n-1) + n^{O(1)}$ which solves to $T(n) = n^{O(1)}$.

\begin{algorithm}[h]
    \caption{$\textsc{Bipartization}(G)$ \Comment{$G$ is a $K_4$-free disk graph}}
	\begin{algorithmic}[1]
	    \State $\mathcal{T} \leftarrow$ a maximal packing of $\Delta(G)$
	    \State $\mathcal{O} \leftarrow \{T \in \Delta(G): T \nsubseteq V(\mathcal{T})\}$
	    \medskip
	    \State $X_1 \leftarrow \textsc{PlanarBip}(G-V(\mathcal{T}))$ \Comment{construct the first solution $S_1$}
	    \State $S_1 \leftarrow V(\mathcal{T}) \cup X_1$
	    \medskip
	    \For{every $T \in \mathcal{T}$} \Comment{construct the second solution $S_2$}
	        \State $v_T \leftarrow \mathsf{random}(T)$
	    \EndFor
	    \State $R \leftarrow \bigcup_{T \in \mathcal{T}} (T \backslash \{v_T\})$ 
	    \State $\mathcal{T}' \leftarrow$ a maximal packing of $\Delta(G-R)$
	    \State $X_2 \leftarrow \textsc{PlanarBip}(G-(R \cup V(\mathcal{T}')))$
	    \State $S_2 \leftarrow R \cup V(\mathcal{T}') \cup X_2$
	    \medskip
	    \If{$\mathcal{O} \neq \emptyset$} \Comment{construct the third solution $S_3$}
	        \State $\mathcal{T}'' \leftarrow$ a maximal packing of $\mathcal{O}$
	        \State $X_3 \leftarrow \textsc{Bipartization}(G-(V(\mathcal{T}'') \cap V(\mathcal{T})))$
	        \State $S_3 \leftarrow (V(\mathcal{T}'') \cap V(\mathcal{T})) \cup X_3$
	    \EndIf
	    \medskip
	    \State \textbf{return} $\min\{S_1,S_2,S_3\}$ \Comment{if $S_3$ is undefined, simply return $\min\{S_1,S_2\}$}
	\end{algorithmic}
	\label{alg-oct}
\end{algorithm}

\subsection{Analysis}
In this section, we analyze the (expected) approximation ratio of Algorithm~\ref{alg-oct}.
We denote by $\rho$ this ratio and aim to establish an upper bound for $\rho$.
Consider a given disk graph $G$ which is $K_4$-free.
Let $\mathsf{opt}$ be the minimum size of an odd cycle transversal of $G$, and $\mathcal{T}, \mathcal{O}$ be the two sets of triangles as in Algorithm~\ref{alg-oct}.
The output of Algorithm~\ref{alg-oct} is the best one among three OCT solutions $S_1,S_2,S_3$.
Therefore, to analyze the approximation ratio of our algorithm, we have to consider the approximation ratios of $S_1,S_2,S_3$.
It turns out that each solution $S_i$ individually may be of size $3 \mathsf{opt}$ or even larger in worst case.
However, as we will see, the best one among them always admits an approximation ratio strictly smaller than 3.

In order to analyze the three solutions, we define two important parameters: $a = |\mathcal{T}|/\mathsf{opt}$ and $b = \mathsf{tri}(\mathcal{O})/\mathsf{opt}$ (recall that $\mathsf{tri}(\mathcal{O})$ is the size of a \textit{maximum} packing of $\mathcal{O}$).
Note that $a,b \in [0,1]$ because both $|\mathcal{T}|$ and $\mathsf{tri}(\mathcal{O})$ are at most the size of a maximum triangle packing in $G$, which is smaller than or equal to $\mathsf{opt}$.
The analysis of $S_1,S_2,S_3$ will be done in terms of $a$ and $b$, that is, we shall represent the approximation ratios of $S_1,S_2,S_3$ as functions of $a$ and $b$.
Roughly speaking, we shall show that $S_1$ is good when $a$ is small, $S_2$ is good when $b$ is small, and $S_3$ is good when both $a$ and $b$ are large.
For convenience, we use the notation $\rho_0$ to denote the approximation ratio of the \textsc{PlanarBip} sub-routine used in Algorithm~\ref{alg-oct}.

\subsubsection{The quality of $S_1$}


The solution $S_1$ is computed using exactly the 3-approximation algorithm described in the introduction.
A more careful analysis shows that its approximation ratio is related to the parameter $a$: the smaller $a$ is, the better $S_1$ is.

\begin{lemma} \label{lem-S1}
$|S_1| \leq (3a+\rho_0(1-a)) \cdot \mathsf{opt}$.
\end{lemma}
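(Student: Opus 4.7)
The plan is a direct decomposition $|S_1| = |V(\mathcal{T})| + |X_1|$ and a separate bound on each summand.

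First I would observe that $|V(\mathcal{T})| = 3|\mathcal{T}| = 3a \cdot \mathsf{opt}$ by the definition of $a$, and that $\mathcal{T}$ being a \emph{maximal} triangle packing of $G$ implies $G - V(\mathcal{T})$ is triangle-free; by the classical result that triangle-free disk graphs are planar (\cite{kratochvil1996intersection}), the call to \textsc{PlanarBip} on $G - V(\mathcal{T})$ is legitimate, and hence $|X_1| \leq \rho_0 \cdot \mathsf{opt}(G - V(\mathcal{T}))$, where $\mathsf{opt}(H)$ denotes the minimum OCT size of a graph $H$.

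The main work is then to bound $\mathsf{opt}(G - V(\mathcal{T}))$ by $(1-a) \cdot \mathsf{opt}$. Let $S^\ast$ be an optimal OCT of $G$. The triangles in $\mathcal{T}$ are pairwise vertex-disjoint odd cycles, so $S^\ast$ must hit each of them, giving $|S^\ast \cap V(\mathcal{T})| \geq |\mathcal{T}| = a \cdot \mathsf{opt}$. On the other hand, $S^\ast \setminus V(\mathcal{T})$ is an OCT of $G - V(\mathcal{T})$, because any odd cycle in $G - V(\mathcal{T})$ is also an odd cycle in $G$ and therefore meets $S^\ast$, necessarily outside $V(\mathcal{T})$. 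Hence
\begin{equation*}
\mathsf{opt}(G - V(\mathcal{T})) \leq |S^\ast \setminus V(\mathcal{T})| = |S^\ast| - |S^\ast \cap V(\mathcal{T})| \leq \mathsf{opt} - a \cdot \mathsf{opt} = (1-a) \cdot \mathsf{opt}.
\end{equation*}

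Combining the two bounds yields
\begin{equation*}
|S_1| = |V(\mathcal{T})| + |X_1| \leq 3a \cdot \mathsf{opt} + \rho_0 \cdot (1-a) \cdot \mathsf{opt} = (3a + \rho_0(1-a)) \cdot \mathsf{opt},
\end{equation*}
as claimed. There is no real obstacle here; the only subtlety worth double-checking is that $\mathcal{T}$ being maximal (not maximum) suffices for the triangle-freeness of $G - V(\mathcal{T})$, and that the disjointness of the triangles in $\mathcal{T}$ is what lets the lower bound $|S^\ast \cap V(\mathcal{T})| \geq |\mathcal{T}|$ be tight enough without losing a constant factor.
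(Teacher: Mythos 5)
Your proof is correct and follows essentially the same route as the paper: decompose $|S_1| = |V(\mathcal{T})| + |X_1|$, use $|V(\mathcal{T})| = 3a \cdot \mathsf{opt}$, and bound the optimum OCT of $G - V(\mathcal{T})$ by $(1-a)\mathsf{opt}$ via the fact that an optimal OCT of $G$ must hit each disjoint triangle of $\mathcal{T}$. The extra remark about maximality of $\mathcal{T}$ guaranteeing triangle-freeness (hence planarity) of $G - V(\mathcal{T})$ is a reasonable addition the paper leaves implicit.
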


\begin{proof}
Since $a=|\mathcal{T}|/\mathsf{opt}$, and $\mathcal{T}$ is a triangle packing, we get that  $|V(\mathcal{T})|=3|\mathcal{T}|=3a\cdot \mathsf{opt}$. Since $\mathcal{T}$ is a triangle packing, any odd cycle transversal contains at least $|\mathcal{T}|$ vertices from 
$V(\mathcal{T})$, the size of a minimum odd cycle transversal in $G-V(\mathcal{T})$ is at most $\mathsf{opt}-|\mathcal{T}|=(1-a)\mathsf{opt}$. Therefore, 
$|S_1|=|V(\mathcal{T})|+|X_1|\leq (3a+\rho_0(1-a)) \cdot \mathsf{opt}$. 
\end{proof}

\subsubsection{The quality of $S_2$}

Figuring out the quality of $S_2$ is the most involved part in our analysis.
Basically, what we shall show is that whenever the parameter $b$ is sufficiently small, $S_2$ always gives us a better-than-3 approximation no matter what the value of $a$ is.
The analysis in this section shall explicitly use the fact that $G$ is $K_4$-free.
Let $v_T$, $R$, and $\mathcal{T}'$ be as in Algorithm~\ref{alg-oct}.
We first need the following simple observation, which will allow us to bound the expected size of $S_2$ using the expected size of $\mathcal{T}'$.

\begin{observation}
\label{lem:largehitopt}
$\mathbb{E}[|R \cap S_\textnormal{opt}|] \geq \frac{1}{3} |R|$ and $|V(\mathcal{T}') \cap S_\textnormal{opt}| \geq |\mathcal{T}'|$.
\end{observation}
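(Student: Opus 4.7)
The plan is to handle the two inequalities separately; both are essentially exercises that rely only on the elementary fact that any OCT must include at least one vertex from every triangle of $G$.

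For the first inequality I would fix an optimal OCT $S_\textnormal{opt}$ and work triangle by triangle. Given $T \in \mathcal{T}$, let $k_T = |T \cap S_\textnormal{opt}| \in \{1,2,3\}$; the lower bound $k_T \geq 1$ holds because $T$ is an odd cycle in $G$ and $G - S_\textnormal{opt}$ is bipartite. The contribution of $T$ to $R$ is the two-element set $T \setminus \{v_T\}$, so the variable of interest is $|(T \setminus \{v_T\}) \cap S_\textnormal{opt}|$. Since $v_T$ is drawn uniformly from $T$, each vertex of $T \cap S_\textnormal{opt}$ survives in $R$ with probability $2/3$, giving
\[
\mathbb{E}\bigl[|(T \setminus \{v_T\}) \cap S_\textnormal{opt}|\bigr] = \frac{2 k_T}{3} \geq \frac{2}{3}
\]
by linearity of expectation. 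Summing over $T \in \mathcal{T}$, using that the triangles in $\mathcal{T}$ are vertex-disjoint (so the sets $R \cap T$ partition $R$) and that $|R| = 2|\mathcal{T}|$, I obtain $\mathbb{E}[|R \cap S_\textnormal{opt}|] \geq \tfrac{2}{3}|\mathcal{T}| = \tfrac{1}{3}|R|$, as desired.

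For the second inequality the argument is deterministic and even shorter. Every $T \in \mathcal{T}'$ is by construction a triangle in $G - R$ and hence a triangle in $G$, so $|T \cap S_\textnormal{opt}| \geq 1$. Because $\mathcal{T}'$ is a packing, its triangles are pairwise vertex-disjoint, so these at-least-one-hit contributions occur at distinct vertices of $V(\mathcal{T}')$. Summing yields $|V(\mathcal{T}') \cap S_\textnormal{opt}| \geq |\mathcal{T}'|$.

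I do not expect any real obstacle: this is a warm-up observation whose entire content is the triangle-hitting property of an OCT together with independence and uniformity of the random choices $\{v_T\}_{T \in \mathcal{T}}$. In particular, neither the $K_4$-freeness of $G$ nor the geometric (pseudo-)disk realization is used here; those hypotheses are reserved for the subsequent step that converts this observation into an upper bound on $\mathbb{E}[|S_2|]$ via a bound on the expected size of the packing $\mathcal{T}'$ of $G - R$.
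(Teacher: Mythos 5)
Your proof is correct and follows essentially the same route as the paper's: bound the per-triangle contribution $\mathbb{E}[|(T\setminus\{v_T\})\cap S_\textnormal{opt}|]\geq \tfrac{2}{3}$ using the triangle-hitting property and uniformity of $v_T$, sum by linearity over the vertex-disjoint triangles of $\mathcal{T}$, and for the second inequality use that $\mathcal{T}'$ is a packing of triangles each hit by $S_\textnormal{opt}$. Your version merely spells out the $\tfrac{2k_T}{3}$ computation that the paper leaves implicit.
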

\begin{proof}
Since $S_\textnormal{opt}$ is an OCT of $G$, it contains at least one vertex in every triangle $T \in \mathcal{T}$.
Thus, $\mathbb{E}[|(T \backslash \{v_T\}) \cap S_\textnormal{opt}|] \geq \frac{2}{3}$.
By the linearity of expectation, we have
\begin{equation*}
    \mathbb{E}[|R \cap S_\textnormal{opt}|] = \sum_{T \in \mathcal{T}} \mathbb{E}[|(T \backslash \{v_T\}) \cap S_\textnormal{opt}|] \geq \frac{2}{3} |\mathcal{T}| = \frac{1}{3}|R|.
\end{equation*}
The fact $|V(\mathcal{T}') \cap S_\textnormal{opt}| \geq |\mathcal{T}'|$ follows from the fact that $\mathcal{T}'$ is a triangle packing: $S_\textnormal{opt}$ contains at least one vertex in every triangle $T \in \mathcal{T}'$.
\end{proof}

The above observation implies that $\mathbb{E}[|(R \cup V(\mathcal{T}')) \cap S_\textnormal{opt})|] \geq \frac{1}{3} |R| + \mathbb{E}[|\mathcal{T}'|]$.
Therefore, we have $\mathbb{E}[|S_\text{opt} \backslash (R \cup V(\mathcal{T}'))|] \leq \mathsf{opt} - \frac{1}{3} |R| - \mathbb{E}[|\mathcal{T}'|]$ and hence $\mathbb{E}[|X_2|] \leq \rho_0 \cdot (\mathsf{opt} - \frac{1}{3} |R| - \mathbb{E}[|\mathcal{T}'|])$.
It follows that
\begin{equation} \label{eq-S2T'}
    \begin{aligned}
        \mathbb{E}[|S_2|] &= |R| + \mathbb{E}[|V(\mathcal{T}')|] + \mathbb{E}[|X_2|] \\
        &\leq |R|+ 3 \cdot \mathbb{E}[|\mathcal{T}'|] + \rho_0 \cdot \left(\mathsf{opt} - \frac{1}{3} |R| - \mathbb{E}[|\mathcal{T}'|]\right) \\
        &= \left(1 - \frac{\rho_0}{3}\right) \cdot |R| + \rho_0 \cdot \mathsf{opt} + (3 - \rho_0) \cdot \mathbb{E}[|\mathcal{T}'|] \\
    \end{aligned}
\end{equation}

We say a vertex $v \in V(\mathcal{T})$ is \textit{dead} if $v \notin R$ and $v$ is not contained in any triangle in $G[V(\mathcal{T}) \backslash R]$.
Let $D$ denote the set of all dead vertices, which is a random subset of $V(\mathcal{T})$ as it depends on the random vertices $v_T$.
For each $v \in V(\mathcal{T})$, let $\mathsf{deg}(v)$ denote the degree of $v$ in $G[V(\mathcal{T})]$.
Recall that $a = |\mathcal{T}|/\mathsf{opt}$ and $b = \mathsf{tri}(\mathcal{O})/\mathsf{opt}$.
It is easy to see the following relation between $|\mathcal{T}'|$ and $|D|$.

\begin{observation} \label{obs-T'dead}
$|\mathcal{T}'| \leq (a + b) \cdot \mathsf{opt} - \frac{|R|}{3} - \frac{|D|}{3}$.
\end{observation}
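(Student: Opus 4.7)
The plan is to decompose the packing $\mathcal{T}'$ according to how each of its triangles meets $V(\mathcal{T})$, and to bound the two pieces separately. Set $W := V(\mathcal{T}) \setminus R = \{v_T : T \in \mathcal{T}\}$; since the triangles of $\mathcal{T}$ are vertex-disjoint, $|W| = |\mathcal{T}|$ and $|R| = 2|\mathcal{T}|$. The vertex set of $G - R$ splits as $W \cup (V(G) \setminus V(\mathcal{T}))$, so I would partition $\mathcal{T}' = \mathcal{T}'_1 \sqcup \mathcal{T}'_2$, where $\mathcal{T}'_1$ consists of the triangles of $\mathcal{T}'$ lying entirely in $W$ and $\mathcal{T}'_2$ consists of those using at least one vertex outside $V(\mathcal{T})$.

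Bounding $\mathcal{T}'_2$ is immediate from the definitions: every $T \in \mathcal{T}'_2$ contains a vertex outside $V(\mathcal{T})$, so $T \in \mathcal{O}$. Vertex-disjointness is inherited from $\mathcal{T}'$, so $\mathcal{T}'_2$ is a packing of $\mathcal{O}$, giving $|\mathcal{T}'_2| \leq \mathsf{tri}(\mathcal{O}) = b \cdot \mathsf{opt}$.

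Bounding $\mathcal{T}'_1$ is where the dead-vertex definition does its work. Each triangle in $\mathcal{T}'_1$ lives inside $G[W] = G[V(\mathcal{T}) \setminus R]$, so every vertex used by $\mathcal{T}'_1$ participates in a triangle of $G[V(\mathcal{T}) \setminus R]$ and is therefore, by definition, a non-dead vertex of $W$. Vertex-disjointness of $\mathcal{T}'_1$ then yields $3|\mathcal{T}'_1| \leq |W \setminus D| = |\mathcal{T}| - |D|$.

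Combining the two bounds and substituting $|\mathcal{T}| = a \cdot \mathsf{opt}$ together with $|R| = 2|\mathcal{T}|$ (so $a\cdot \mathsf{opt}/3 = a\cdot \mathsf{opt} - |R|/3$) gives
\[
|\mathcal{T}'| \;\leq\; \frac{|\mathcal{T}| - |D|}{3} + b \cdot \mathsf{opt} \;=\; (a+b)\mathsf{opt} - \frac{|R|}{3} - \frac{|D|}{3},
\]
which is the claimed bound. I do not anticipate a real obstacle here; the only subtlety is to verify that the ambient graph in the dead-vertex definition is exactly $G[W]$, so that no vertex used by a triangle in $\mathcal{T}'_1$ can be dead. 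Everything else is just careful accounting, and in particular no property of disk graphs or $K_4$-freeness is needed for this observation itself.
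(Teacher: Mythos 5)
Your proof is correct and follows essentially the same route as the paper: your split $\mathcal{T}' = \mathcal{T}'_1 \sqcup \mathcal{T}'_2$ coincides with the paper's split into $\mathcal{T}' \setminus \mathcal{O}$ and $\mathcal{T}' \cap \mathcal{O}$ (since triangles of $\mathcal{T}'$ avoid $R$, lying entirely in $W = V(\mathcal{T})\setminus R$ is the same as not being in $\mathcal{O}$), and both arguments bound the first part via non-dead vertices of $V(\mathcal{T})\setminus R$ and the second via $\mathsf{tri}(\mathcal{O}) = b\cdot\mathsf{opt}$. The only cosmetic difference is that you substitute $|R| = 2|\mathcal{T}|$ while the paper keeps $|R|$ symbolic, which yields the identical bound.
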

\begin{proof}
Since $\mathcal{T}'$ is a triangle packing, we have $|\mathcal{T}' \cap \mathcal{O}| \leq \mathsf{tri}(\mathcal{O}) = b \cdot \mathsf{opt}$.
On the other hand, all elements in $\mathcal{T}' \backslash \mathcal{O}$ are triangles in $G[V(\mathcal{T}) \backslash R]$.
However, dead vertices cannot be the vertex of any triangle in $G[V(\mathcal{T}) \backslash R]$.
Thus, the vertices of the triangles in $\mathcal{T}' \backslash \mathcal{O}$ must lie in $V(\mathcal{T}) \backslash (R \cup D)$.
We have $|V(\mathcal{T}) \backslash (R \cup D)| = 3 \cdot |\mathcal{T}| - |R| - |D| = 3a \cdot \mathsf{opt} - |R| - |D|$, which implies $|\mathcal{T}' \backslash \mathcal{O}| \leq a \cdot \mathsf{opt} - \frac{|R|}{3} - \frac{|D|}{3}$.
Because $|\mathcal{T}'| = |\mathcal{T}' \cap \mathcal{O}|+|\mathcal{T}' \backslash \mathcal{O}|$, we have the inequality in the observation.
\end{proof}

Combining the above observation with Equation~\ref{eq-S2T'}, we have
\begin{equation} \label{eq-S2D}
    \begin{aligned}
         \mathbb{E}[|S_2|] &\leq \left(1 - \frac{\rho_0}{3}\right) \cdot |R| + \rho_0 \cdot \mathsf{opt} + (3 - \rho_0) \cdot \mathbb{E}[|\mathcal{T}'|] \\
         &= \left(1 - \frac{\rho_0}{3}\right) \cdot |R| + \rho_0 \cdot \mathsf{opt} + (3 - \rho_0) \cdot \left( (a + b) \cdot \mathsf{opt} - \frac{|R|}{3} - \frac{\mathbb{E}[D]}{3} \right) \\
         &= (\rho_0 + (3 - \rho_0) (a+b)) \cdot \mathsf{opt} - (3 - \rho_0) \cdot \frac{\mathbb{E}[|D|]}{3}.
    \end{aligned}
\end{equation}

To show that $S_2$ has an approximation ratio below 3 when $b$ is small, the crucial observation is that we have a large number of dead vertices in expectation.

\begin{observation} \label{obs-dead}
$\Pr[v \textnormal{ is dead}] \geq (\frac{1}{3})^{\frac{3}{8}\mathsf{deg}(v)+\frac{1}{4}}$ for all $v \in V(\mathcal{T})$.
Thus, we have $\mathbb{E}[|D|] \geq (\frac{1}{3})^{\frac{3}{8} d+\frac{1}{4}} (3a \cdot \mathsf{opt})$, where $d$ is the average degree of $G[V(\mathcal{T})]$.
\end{observation}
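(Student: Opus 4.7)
The plan is to split the event $\{v \text{ is dead}\}$ via conditioning on $v = v_{T_v}$, where $T_v = \{v, v_1, v_2\}$ is the unique triangle of $\mathcal{T}$ containing $v$. This event has probability $1/3$ and automatically places $v_1, v_2 \in R$. Writing $N := N_G(v) \cap V(\mathcal{T})$ and $A := N \setminus \{v_1, v_2\}$, and letting $H := G[N]$, the condition ``$v$ belongs to no triangle of $G[V(\mathcal{T}) \setminus R]$'' is equivalent to ``$A \setminus R$ is an independent set in $H$'', since any such triangle corresponds to an edge of $H$ between two non-$R$ neighbors of $v$.

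The key structural fact is that $H$ is triangle-free: a triangle among three neighbors of $v$ would form a $K_4$ together with $v$, contradicting the $K_4$-freeness of $G$. Consequently $H$ is a triangle-free disk graph, hence planar~\cite{kratochvil1996intersection}, and Grötzsch's theorem furnishes a proper $3$-coloring $\chi : N \to \{1,2,3\}$. For each color $i$, the event $E_i := \{(A \setminus R) \subseteq \chi^{-1}(i)\}$ is sufficient for $A \setminus R$ to be independent in $H$. Classify the triangles $T^j \in \mathcal{T} \setminus \{T_v\}$ intersecting $A$ by $n_j := |T^j \cap A|$; a second application of $K_4$-freeness rules out $n_j = 3$, so $n_j \in \{1,2\}$. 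Let $n_1, n_2$ count these triangles, so that $\mathsf{deg}(v) = n_1 + 2n_2 + 2$. By independence of the choices $v_{T^j}$, $\Pr[E_i \mid v = v_{T_v}] = \prod_j p_j(i)$, and a one-line case analysis gives $\prod_{i=1}^{3} p_j(i) = 4/9$ for $n_j = 1$ (one factor equals $1$, the other two equal $2/3$) and $4/27$ for $n_j = 2$ (two factors equal $2/3$ and one equals $1/3$, using that the two vertices of $T^j \cap A$ are adjacent and so receive different colors). Therefore
\[
\prod_{i=1}^{3} \Pr[E_i \mid v = v_{T_v}] \;=\; (4/9)^{n_1} (4/27)^{n_2},
\]
and AM--GM yields $\max_i \Pr[E_i \mid v = v_{T_v}] \geq (2/3)^{2(n_1+n_2)/3} (1/3)^{n_2/3}$.

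Combining with the conditioning gives $\Pr[v \text{ is dead}] \geq \tfrac{1}{3}\cdot(2/3)^{2(n_1+n_2)/3}\cdot(1/3)^{n_2/3}$. Taking $\log_3$ and substituting $\mathsf{deg}(v) = n_1 + 2n_2 + 2$, the target inequality reduces to the linear comparison
\[
\tfrac{2}{3}\log_3 2 \cdot (n_1 + n_2) \;\geq\; \tfrac{7}{24}\, n_1 + \tfrac{1}{4}\, n_2,
\]
which holds for all nonnegative $n_1, n_2$ because $\tfrac{2}{3}\log_3 2 \geq \tfrac{7}{24}$ (equivalently $2^{16} \geq 3^{7}$). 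The ``thus'' clause is then immediate from Jensen's inequality applied to the convex function $t \mapsto (1/3)^t$:
\[
\mathbb{E}[|D|] = \sum_{v \in V(\mathcal{T})} \Pr[v\text{ is dead}] \;\geq\; |V(\mathcal{T})| \cdot (1/3)^{(3/8) d + 1/4} = 3a \cdot \mathsf{opt} \cdot (1/3)^{(3/8)d + 1/4},
\]
where $d$ is the average degree of $G[V(\mathcal{T})]$. The main obstacle is identifying the right structural property to exploit: the naive scheme of always forcing $v_{T^j} \notin A^j$ only produces $(1/3)(2/3)^{n_1}(1/3)^{n_2}$, which fails the target bound whenever $n_2$ is significant; the $3$-colorability of $H$ delivered by triangle-free-planarity (and hence by $K_4$-freeness of $G$) is precisely what supplies the extra slack.
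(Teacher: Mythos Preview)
Your proof is correct but takes a genuinely different route from the paper. Both arguments start from the observation that $G[N(v)]$ is triangle-free (by $K_4$-freeness) and hence planar. The paper then invokes the Four Color Theorem to extract a vertex cover $C$ of $G[N(v)\setminus T_0]$ with $|C|\le \frac{3}{4}(\mathsf{deg}(v)-2)$, and bounds the single event $\{v\notin R,\ C\subseteq R\}$ by classifying triangles of $\mathcal{T}$ according to $|T\cap C|\in\{1,2\}$; the worst case $|\mathcal{T}_1|=0$ yields $\Pr[C\subseteq R]\ge (1/3)^{|C|/2}\ge (1/3)^{\frac{3}{8}(\mathsf{deg}(v)-2)}$. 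You instead apply Gr\"otzsch's theorem to $3$-color $H$, consider for each color $i$ the event $E_i$ that the surviving neighbors of $v$ are monochromatic of color $i$, and lower-bound $\max_i \Pr[E_i]$ by the geometric mean; the product $\prod_i \Pr[E_i\mid v=v_{T_v}]$ factors cleanly over triangles with per-triangle contributions $4/9$ and $4/27$ dictated by the coloring. Your approach avoids the worst-case pessimization over how a fixed $C$ is distributed among the triangles of $\mathcal{T}$, and in fact your intermediate bound $(1/3)(2/3)^{2(n_1+n_2)/3}(1/3)^{n_2/3}$ is strictly sharper than the paper's whenever $\mathsf{deg}(v)>2$; the paper's approach is slightly more elementary in only invoking $4$-colorability and in reducing deadness to a single deterministic certificate $C\subseteq R$.
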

\begin{proof}
Let $v \in V(\mathcal{T})$ and $T_0 \in \mathcal{T}$ be the triangle containing $v$.
Denote by $N(v)$ the set of neighbors of $v$ in $V(\mathcal{T})$ (excluding $v$ itself).
We then have $|N(v)| = \mathsf{deg}(v)$ and $|N(v) \backslash T_0| = \mathsf{deg}(v) - 2$.
Observe that the graph $G[N(v)]$ is triangle-free.
Indeed, if $G[N(v)]$ contains a triangle $T$, then $T \cup \{v\}$ forms a clique in $G$ of size 4, which contradicts with the fact that $G$ is $K_4$-free.
As $G[N(v)]$ is triangle-free, it is planar.
In particular, $G[N(v) \backslash T_0]$ is planar.
It was known that every $n$-vertex planar graph has a vertex cover of size at most $\frac{3}{4}n$ (indeed, a planar graph is 4-colorable, so it has an independent set of size at least $\frac{n}{4}$ and thus a vertex cover of size at most $\frac{3}{4}n$).
Thus, $G[N(v) \backslash T_0]$ has a vertex cover $C \subseteq N(v) \backslash T_0$ with $|C| \leq \frac{3}{4}|N(v) \backslash T_0| = \frac{3}{4} (\mathsf{deg}(v)-2)$.

Next, we notice that if $v \notin R$ and $C \subseteq R$, then $v$ is a dead vertex.
Indeed, if $v$ is contained in a triangle $\{v,u,w\}$ in $G[V(\mathcal{T}) \backslash R]$, then at least one of $u$ and $w$ must be in $C$, since $C$ is a vertex cover of $N(v) \backslash T_0$ (note that $u,w \notin T_0$ for otherwise $v \in R$).
Let $\mathcal{T}_1 \subseteq \mathcal{T}$ (resp., $\mathcal{T}_2 \subseteq \mathcal{T}$) consist of the triangles that contain one vertex (resp., two vertices) in $C$.
Note that no triangle in $\mathcal{T}$ can contain three vertices in $C$ because $G[C]$ is triangle-free.
Therefore, $C \subseteq R$ if and only if $v_T \notin C$ for all $T \in \mathcal{T}_1 \cup \mathcal{T}_2$.
The events $v_T \notin C$ for all $T \in \mathcal{T}_1 \cup \mathcal{T}_2$ are independent, and happen with probability $\frac{2}{3}$ if $T \in \mathcal{T}_1$ and with probability $\frac{1}{3}$ if $T \in \mathcal{T}_2$.
It follows that $\Pr[C \subseteq R] = (\frac{2}{3})^{|\mathcal{T}_1|} \cdot (\frac{1}{3})^{|\mathcal{T}_2|}$.
We have the inequality $|\mathcal{T}_1| + 2|\mathcal{T}_2| = |C| \leq \frac{3}{4} (\mathsf{deg}(v)-2)$.
Subject to $|\mathcal{T}_1| \geq 0$, $|\mathcal{T}_2| \geq 0$, and $|\mathcal{T}_1| + 2|\mathcal{T}_2| \leq \frac{3}{4} (\mathsf{deg}(v)-2)$, the quantity $(\frac{2}{3})^{|\mathcal{T}_1|} \cdot (\frac{1}{3})^{|\mathcal{T}_2|}$ is minimized when $|\mathcal{T}_1| = 0$ and $|\mathcal{T}_2| = \frac{3}{8} (\mathsf{deg}(v)-2)$, and is equal to $(\frac{1}{3})^{\frac{3}{8} (\mathsf{deg}(v)-2)}$.
Thus, we have $\Pr[C \subseteq R] \geq (\frac{1}{3})^{\frac{3}{8} (\mathsf{deg}(v)-2)}$.
Furthermore, the events $v \notin R$ and $C \subseteq R$ are independent because whether $v \notin R$ happens only depends on the choice of $v_{T_0} \in T_0$.
We have $\Pr[v \notin R] = \frac{1}{3}$ and $\Pr[C \subseteq R] \geq (\frac{1}{3})^{\frac{3}{8} (\mathsf{deg}(v)-2)}$.
Since $v$ is a dead vertex if both $v \notin R$ and $C \subseteq R$ happen, we finally have $\Pr[v \textnormal{ is dead}] \geq (\frac{1}{3})^{\frac{3}{8}\mathsf{deg}(v)+\frac{1}{4}}$.
By the linearity of expectation and the fact $|V(\mathcal{T})| = 3|\mathcal{T}| = 3a \cdot \mathsf{opt}$, we then have
\begin{equation*}
    \mathbb{E}[|D|] = \sum_{v \in V(\mathcal{T})} \Pr[v \textnormal{ is dead}] \geq \sum_{v \in V(\mathcal{T})} \left(\frac{1}{3}\right)^{\frac{3}{8}\mathsf{deg}(v)+\frac{1}{4}} \geq \left(\frac{1}{3}\right)^{\frac{3}{8}d+\frac{1}{4}} (3a \cdot \mathsf{opt}).
\end{equation*}
where $d = \sum_{v \in V(\mathcal{T})} \mathsf{deg}(v) / |V(\mathcal{T})|$.
\end{proof}

At the end, we can establish the bound for $\mathbb{E}[|S_2|]$.

\begin{lemma} \label{lem-S2}
$\mathbb{E}[|S_2|] \leq (\rho_0 + (1-(\frac{1}{3})^{\frac{3}{8}d+\frac{1}{4}})(3-\rho_0) a + (3- \rho_0) b) \cdot \mathsf{opt}$, where $d$ denotes the average degree of $G[V(\mathcal{T})]$.
\end{lemma}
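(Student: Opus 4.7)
The plan is to simply substitute the lower bound on $\mathbb{E}[|D|]$ from Observation~\ref{obs-dead} into the upper bound on $\mathbb{E}[|S_2|]$ established in Equation~\ref{eq-S2D}, and then rearrange the resulting expression to match the form claimed in the lemma. No new combinatorial argument is required at this step; all the work has already been done in the preceding observations.

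More precisely, I would first recall that Equation~\ref{eq-S2D} gives
\[
\mathbb{E}[|S_2|] \;\leq\; \bigl(\rho_0 + (3-\rho_0)(a+b)\bigr)\cdot \mathsf{opt} \;-\; (3-\rho_0)\cdot \frac{\mathbb{E}[|D|]}{3},
\]
and that Observation~\ref{obs-dead} gives $\mathbb{E}[|D|] \geq (1/3)^{\frac{3}{8}d+\frac{1}{4}}\cdot 3a\cdot \mathsf{opt}$. Plugging the latter into the former (noting that $3-\rho_0 \geq 0$, so the inequality direction is preserved when we upper-bound the negative term by using the lower bound on $\mathbb{E}[|D|]$) and cancelling the factor of $3$, I get
\[
\mathbb{E}[|S_2|] \;\leq\; \bigl(\rho_0 + (3-\rho_0)(a+b)\bigr)\cdot \mathsf{opt} \;-\; (3-\rho_0)\left(\tfrac{1}{3}\right)^{\frac{3}{8}d+\frac{1}{4}} a\cdot \mathsf{opt}.
\]

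Then I would factor the $a$-terms together, writing
\[
(3-\rho_0)\, a \;-\; (3-\rho_0)\left(\tfrac{1}{3}\right)^{\frac{3}{8}d+\frac{1}{4}} a \;=\; \left(1-\left(\tfrac{1}{3}\right)^{\frac{3}{8}d+\frac{1}{4}}\right)(3-\rho_0)\, a,
\]
and leave the $b$-term as $(3-\rho_0) b$ and the constant as $\rho_0$. This yields exactly the stated bound
\[
\mathbb{E}[|S_2|] \;\leq\; \left(\rho_0 + \left(1-\left(\tfrac{1}{3}\right)^{\frac{3}{8}d+\frac{1}{4}}\right)(3-\rho_0)\, a + (3-\rho_0)\, b\right)\cdot \mathsf{opt}.
\]

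There is essentially no obstacle here: the lemma is a bookkeeping step that consolidates Equation~\ref{eq-S2D} and Observation~\ref{obs-dead}. The only thing to verify is that $3 - \rho_0 \geq 0$ (so that the substitution of a lower bound on $\mathbb{E}[|D|]$ yields an upper bound on $\mathbb{E}[|S_2|]$), which holds since $\rho_0$ is the approximation ratio of an algorithm for planar \textsc{Bipartization} (so $\rho_0 \leq 9/4 < 3$ using the Goemans–Williamson bound, and in any case $\rho_0$ will be treated as a constant less than $3$ in the subsequent analysis).
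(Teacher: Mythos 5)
Your proposal is correct and matches the paper's proof exactly: the paper likewise obtains the bound by combining Equation~\ref{eq-S2D} with the lower bound on $\mathbb{E}[|D|]$ from Observation~\ref{obs-dead}, and your write-up merely spells out the algebra (including the sign check on $3-\rho_0$) that the paper leaves implicit.
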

\begin{proof}
Combining Observation~\ref{obs-dead} with Equation~\ref{eq-S2D} completes the proof.
\end{proof}

According to Lemma~\ref{lem-degen}, we have $d \leq 22$, and thus the above lemma gives us a good bound for $\mathbb{E}[|S_2|]$: as long as $b$ is sufficiently small, no matter what $a$ is, we can have that $\mathbb{E}[|S_2|] \leq (3-\alpha) \cdot \mathsf{opt}$ for some constant $\alpha > 0$.

\subsubsection{The quality of $S_3$}

Finally, we analyze the quality of $S_3$.
Given $S_1$ is good when $a$ is small and $S_2$ is good when $b$ is small, we clearly want $S_3$ to be good when both $a$ and $b$ are large.

\begin{lemma} \label{lem-S3}
If $\rho \geq 2$, then $\mathbb{E}[|S_3|] \leq (\frac{2b}{3}+\rho(2-a-\frac{b}{3})) \cdot \mathsf{opt}$.
\end{lemma}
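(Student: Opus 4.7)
The plan is to decompose $|S_3| = |V(\mathcal{T}'') \cap V(\mathcal{T})| + |X_3|$ and bound each piece separately. Every triangle $T \in \mathcal{T}'' \subseteq \mathcal{O}$ contains at least one vertex outside $V(\mathcal{T})$ (by the definition of $\mathcal{O}$), which gives $|V(\mathcal{T}'') \cap V(\mathcal{T})| \leq 2|\mathcal{T}''|$. For $X_3 = \textsc{Bipartization}(G^*)$, where $G^* := G - (V(\mathcal{T}'') \cap V(\mathcal{T}))$ has strictly fewer vertices than $G$ (maximality of $\mathcal{T}$ forces every triangle in $\mathcal{O}$ to meet $V(\mathcal{T})$, so $V(\mathcal{T}'') \cap V(\mathcal{T}) \neq \emptyset$), I will induct on $|V(G)|$: the inductive hypothesis gives $\mathbb{E}[|X_3|] \leq \rho \cdot \mathsf{opt}^*$, where $\mathsf{opt}^*$ denotes the OCT optimum of $G^*$.

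The core step will be to establish the upper bound $\mathsf{opt}^* \leq (2-a)\mathsf{opt} - |\mathcal{T}''|$, via inclusion--exclusion applied to the optimum OCT $S_\textnormal{opt}$. The two triangle packings $\mathcal{T}$ and $\mathcal{T}''$ force $|S_\textnormal{opt} \cap V(\mathcal{T})| \geq |\mathcal{T}| = a \cdot \mathsf{opt}$ and $|S_\textnormal{opt} \cap V(\mathcal{T}'')| \geq |\mathcal{T}''|$, and combined with $|S_\textnormal{opt}| = \mathsf{opt}$ they yield
\[
|S_\textnormal{opt} \cap V(\mathcal{T}) \cap V(\mathcal{T}'')| \;\geq\; a \cdot \mathsf{opt} + |\mathcal{T}''| - \mathsf{opt}.
\]
Because $S_\textnormal{opt} \setminus (V(\mathcal{T}) \cap V(\mathcal{T}''))$ remains an OCT of $G^*$, the bound on $\mathsf{opt}^*$ follows. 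When the right-hand side above is negative the same conclusion is still valid as a loose consequence of $\mathsf{opt}^* \leq \mathsf{opt}$ together with $a \leq 1$ and $|\mathcal{T}''| \leq \mathsf{tri}(\mathcal{O}) \leq \mathsf{opt}$, so no case distinction is needed.

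Combining the two ingredients yields $\mathbb{E}[|S_3|] \leq (2-\rho)|\mathcal{T}''| + \rho(2-a)\mathsf{opt}$. The hypothesis $\rho \geq 2$ renders the coefficient of $|\mathcal{T}''|$ non-positive, so I now want $|\mathcal{T}''|$ to be as small as possible, which means invoking a \emph{lower} bound. The standard maximal-packing argument does the job: every triangle in a maximum packing of $\mathcal{O}$ shares a vertex with $V(\mathcal{T}'')$ (else $\mathcal{T}''$ would not be maximal), and the $3|\mathcal{T}''|$ vertices of $V(\mathcal{T}'')$ can meet at most $3|\mathcal{T}''|$ vertex-disjoint triangles. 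Hence $|\mathcal{T}''| \geq \mathsf{tri}(\mathcal{O})/3 = b \cdot \mathsf{opt}/3$, and substituting this value produces exactly $\bigl(\tfrac{2b}{3} + \rho(2 - a - \tfrac{b}{3})\bigr)\mathsf{opt}$.

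The main obstacle I expect is the inclusion--exclusion step. The naive bound $\mathsf{opt}^* \leq \mathsf{opt}$ paired with $|V(\mathcal{T}'') \cap V(\mathcal{T})| \leq 2|\mathcal{T}''|$ gives only $\mathbb{E}[|S_3|] \leq 2|\mathcal{T}''| + \rho \cdot \mathsf{opt}$, which after $|\mathcal{T}''| \leq b \cdot \mathsf{opt}$ does not improve on the folklore $3$-approximation. The decisive observation is that $S_\textnormal{opt}$ must pay simultaneously for the $\mathcal{T}$-triangles and the $\mathcal{T}''$-triangles, and the two ``bills'' overlap precisely on $V(\mathcal{T}) \cap V(\mathcal{T}'')$, which is exactly the set removed before the recursive call; this is what transfers the $a \cdot \mathsf{opt}$ savings into $\mathsf{opt}^*$ and makes $S_3$ the complementary solution that kicks in when $S_1$ (weak for large $a$) and $S_2$ (weak for large $b$) are both weak.
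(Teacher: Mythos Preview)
Your proof is correct and follows essentially the same route as the paper: bound $|V(\mathcal{T}'') \cap V(\mathcal{T})| \leq 2|\mathcal{T}''|$, show the optimum of $G-(V(\mathcal{T}'')\cap V(\mathcal{T}))$ is at most $(2-a)\mathsf{opt}-|\mathcal{T}''|$, combine to get $(2-\rho)|\mathcal{T}''|+\rho(2-a)\mathsf{opt}$, and finish via $|\mathcal{T}''|\geq b\,\mathsf{opt}/3$ together with $\rho\geq 2$. The only cosmetic difference is in the middle step: the paper partitions $\mathcal{T}''$ into ``good'' and ``bad'' triangles (according to whether $S_\textnormal{opt}$ hits them inside or outside $V(\mathcal{T})$) to obtain $|S_\textnormal{opt}\cap V(\mathcal{T}'')\cap V(\mathcal{T})|\geq |\mathcal{T}''|-(1-a)\mathsf{opt}$, whereas your inclusion--exclusion argument reaches the identical bound more directly.
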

\begin{proof}
Let $r=|\mathcal{T}''|/\mathsf{opt}$.
Since $\mathcal{T}''$ is a \textit{maximal} packing in $\mathcal{O}$, $V(\mathcal{T}'')$ is a hitting set of $\mathcal{O}$, which implies $3 |\mathcal{T}''| = |V(\mathcal{T}'')| \geq \mathsf{tri}(\mathcal{O})$ and thus $r\geq \frac{\mathsf{tri}(\mathcal{O})}{3 \cdot  \mathsf{opt}}=\frac{b}{3}$.

\begin{claim}
\label{clim:X3}
$\mathbb{E}[|X_3|] \leq \rho (2-a-r)\mathsf{opt}$. 
\end{claim}

\begin{proof}
Let $S^*$ be an optimal OCT of $G$. 
So $|S^*|=\mathsf{opt}$. 
We call a triangle $T \in \mathcal{T}''$ {\em bad} if $S^*$ contains a vertex from $V(T)\setminus V(\mathcal{T})$.
Since $\mathcal{T}$ is a triangle packing, we also know that $|S^*\cap V(\mathcal{T})|\geq |\mathcal{T}|=a \cdot \mathsf{opt}$.
Thus, the number of bad triangles in $\mathcal{T}''$ is at most $(1-a)\mathsf{opt}$, because $S^*$ contains at most $(1-a)\mathsf{opt}$ vertices outside of $V(\mathcal{T})$, and any such vertex can be part of at most one triangle in $\mathcal{T}''$. 
That is, the number of \textit{good} triangles in $\mathcal{T}''$ is at least $(r-(1-a))\mathsf{opt}$.
For each good triangle $T \in \mathcal{T}''$, $S^*$ contains a vertex from $V(T)\cap V(\mathcal{T})$.
This implies that $|S^*\cap (V(\mathcal{T}'')\cap V(\mathcal{T}))| \geq (r-(1-a))\mathsf{opt}$, and hence $|S^* \setminus (V(\mathcal{T}'')\cap V(\mathcal{T}))| \leq \mathsf{opt}- (r-(1-a))\mathsf{opt}= (2-a-r)\mathsf{opt}$.
Also, notice that $S^*\setminus (V(\mathcal{T}'')\cap V(\mathcal{T}))$ is an OCT of $G-(V(\mathcal{T}'')\cap V(\mathcal{T}))$.
Thus, the size of an optimal OCT of $G-(V(\mathcal{T}'')\cap V(\mathcal{T}))$ is at most $(2-a-r)\mathsf{opt}$, which implies $\mathbb{E}[|X_3|] \leq \rho (2-a-r)\mathsf{opt}$.
\end{proof}
Since each triangle $T \in \mathcal{T}''$ is also in $\mathcal{O}$, we have $|V(T)\cap V(\mathcal{T})|\leq 2$.
This implies that $|V(\mathcal{T}'') \cap V(\mathcal{T})| \leq 2r \cdot \mathsf{opt}$.
Now we are ready to deduce
\begin{eqnarray*}
\mathbb{E}[|S_3|] &=& \mathbb{E}[|V(\mathcal{T}'') \cap V(\mathcal{T})|+|X_3|] \\
&=& |V(\mathcal{T}'') \cap V(\mathcal{T})| + \mathbb{E}[|X_3|] \\
&\leq & (2r \cdot \mathsf{opt}) + \rho (2-a-r)\mathsf{opt} \qquad\qquad (\mbox{By Claim~\ref{clim:X3} and }|V(\mathcal{T}'') \cap V(\mathcal{T})|\leq 2r \cdot \mathsf{opt})\\
&=& (2-\rho)r\cdot \mathsf{opt}+ \rho (2-a) \mathsf{opt} \\
&\leq& (2-\rho) \frac{b}{3} \mathsf{opt}+  \rho (2-a) \mathsf{opt} \qquad \qquad (\mbox{Because } r\geq \frac{b}{3}, \mbox{ and } 2-\rho<0)\\
&\leq & \left(\frac{2b}{3}+\rho\left(2-a-\frac{b}{3}\right)\right)\cdot \mathsf{opt}.
\end{eqnarray*}
This completes the proof of the lemma.
\end{proof}

\subsubsection{Putting everything together} \label{sec-analysis}
Given the analyses for $S_1$, $S_2$, and $S_2$ in the previous sections, we are ready to bound the (expected) approximation ratio $\rho$ of the entire algorithm.
Let $\rho_1 = 3a+\rho_0(1-a)$, $\rho_2 = \rho_0 + (1-(\frac{1}{3})^{\frac{3}{8}d+\frac{1}{4}})(3-\rho_0) a + (3- \rho_0) b$, $\rho_3 = \frac{2b}{3}+\rho(2-a-\frac{b}{3})$ be the approximation ratios of $S_1,S_2,S_3$ given in Lemmas~\ref{lem-S1}, \ref{lem-S2}, \ref{lem-S3}, respectively\footnote{In Lemma~\ref{lem-S3}, the bound for $\mathbb{E}[|S_3|]$ has a condition $\rho \geq 2$. But we can assume this is always the case, for otherwise our algorithm is already a 2-approximation algorithm.}.
As the output is the best one among $S_1,S_2,S_3$, we have
\begin{equation*}
    \rho \leq \frac{\mathbb{E}[\min\{|S_1|,|S_2|,|S_3|\}]}{\mathsf{opt}} \leq \frac{\min\{\mathbb{E}[|S_1|],\mathbb{E}[|S_2|],\mathbb{E}[|S_3|]\}}{\mathsf{opt}} \leq \min\{\rho_1,\rho_2,\rho_3\}.
\end{equation*}
Note that $\rho_1,\rho_2,\rho_3$ can be viewed as linear functions of $a$ and $b$, when the other numbers $d$, $\rho_0$, $\rho$ are all fixed.
So we first figure out the values of $a$ and $b$ that maximizes $\min\{\rho_1,\rho_2,\rho_3\}$.
With calculation, we have
\begin{equation} \label{eq-min}
    \min\{\rho_1,\rho_2,\rho_3\} \leq \frac{(3-\rho_0)(2\rho - \rho_0)}{(3-\rho_0+\rho) + (\rho-2) \cdot 3^{-(\frac{3}{8}d + \frac{5}{4})}} + \rho_0,
\end{equation}
and the upper bound is achieved when
\begin{equation*}
    a = \frac{(2\rho - \rho_0)}{(3-\rho_0+\rho) + (\rho-2) \cdot 3^{-(\frac{3}{8}d + \frac{5}{4})}} \text{ and } b = \frac{(2\rho - \rho_0) \cdot 3^{\frac{3}{8}d + \frac{1}{4}}}{(3-\rho_0+\rho) + (\rho-2) \cdot 3^{-(\frac{3}{8}d + \frac{5}{4})}},
\end{equation*}
in which case $\rho_1 = \rho_2 = \rho_3$.
Now combine Equation~\ref{eq-min} with the inequality $\rho \leq \min\{\rho_1,\rho_2,\rho_3\}$ and re-arrange the terms in the inequality, we deduce
\begin{equation*}
    (3^{-(\frac{3}{8}d + \frac{5}{4})}+1) \cdot \rho^2 - ((2+\rho_0) \cdot 3^{-(\frac{3}{8}d + \frac{5}{4})} +3) \cdot \rho + (2 \rho_0 \cdot 3^{-(\frac{3}{8}d + \frac{5}{4})}) \leq 0.
\end{equation*}
The left-hand side of the above inequality is a quadratic function of $\rho$ in which the coefficient of the quadratic term is positive.
Therefore, in order to make the quadratic function non-positive, $\rho$ must be smaller than its larger root, i.e.,
\begin{equation*}
    \rho \leq \frac{((2+\rho_0) \cdot 3^{-(\frac{3}{8}d + \frac{5}{4})} +3) + \sqrt{((2+\rho_0) \cdot 3^{-(\frac{3}{8}d + \frac{5}{4})} +3)^2 - (3^{-(\frac{3}{8}d + \frac{5}{4})}+1)(8 \rho_0 \cdot 3^{-(\frac{3}{8}d + \frac{5}{4})})}}{2 \cdot (3^{-(\frac{3}{8}d + \frac{5}{4})}+1)}.
\end{equation*}
By Lemma~\ref{lem-degen}, $G[V(\mathcal{T})]$ is $11$-degenerate and thus $d \leq 22$.
Furthermore, using the $\frac{9}{4}$-approximation algorithm \cite{GoemansW98} for planar bipartization, we can set $\rho_0 = \frac{9}{4}$.
Plugging in these values to the above inequality, we have $\rho \leq 2.99993033741$.

Our entire algorithm first applies Lemma~\ref{lem:k4reduction} with a maximal packing $\mathcal{C}$ of $K_4(G)$ to reduce the problem to $G - V(\mathcal{C})$, which is a $K_4$-free disk graph, and then applies Algorithm~\ref{alg-oct} on $G - V(\mathcal{C})$.
By Lemma~\ref{lem:k4reduction} and the above analysis, this algorithm solves \textsc{Bipartization} on disk graphs with an \textit{expected} approximation ratio at most $2.99993033741$.
By repeating the algorithm polynomial number of times, we can also obtain a randomized algorithm that achieves the same approximation ratio with high probability.

\begin{theorem}
There exists a polynomial-time randomized algorithm for \textsc{Bipartization} on the class of disk graphs that gives a $(3-\alpha)$-approximation solution with high probability, for some $\alpha > 10^{-5}$.
\end{theorem}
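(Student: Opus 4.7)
The plan is to combine everything already developed in the excerpt with a standard amplification argument. The preceding paragraph of the paper lays out the full algorithmic pipeline: first compute a greedy maximal $K_4$-packing $\mathcal{C}$ of the input disk graph $G$, apply Lemma~\ref{lem:k4reduction} to reduce to the $K_4$-free disk graph $G - V(\mathcal{C})$, then invoke Algorithm~\ref{alg-oct} on this reduced instance, and finally output $V(\mathcal{C})$ together with the OCT returned. The analysis in Section~\ref{sec-analysis} yields $\mathbb{E}[|S|] \leq \rho^* \cdot \mathsf{opt}$ with $\rho^* \leq 2.99993033741$, so the expected approximation ratio is already strictly below $3$ by a margin of at least $6.9 \times 10^{-5}$.

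To convert the expectation bound into a high-probability guarantee, I would fix any $\alpha$ with $10^{-5} < \alpha < 3 - \rho^*$, which is an open interval that is nonempty thanks to the above margin, and set $q = \rho^*/(3-\alpha) < 1$. Markov's inequality then gives $\Pr[|S| > (3-\alpha)\cdot \mathsf{opt}] \leq q$. Running the randomized algorithm $N$ independent times and returning the smallest OCT found across the $N$ trials, the failure probability is at most $q^N$. Choosing $N = \Theta(\log n / \log(1/q))$, which is polynomial in $n$ since $q$ is a fixed constant less than $1$, pushes the failure probability below $n^{-c}$ for any constant $c$. Since each individual trial runs in polynomial time, so does the overall procedure.

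The main obstacle, really, lies not in this final amplification step but in having secured the strict inequality $\rho^* < 3$ with a quantitatively identifiable margin in Section~\ref{sec-analysis}: this in turn rests on the interplay of the three candidate solutions $S_1, S_2, S_3$, the $11$-degeneracy of $K_4$-free disk graphs from Lemma~\ref{lem-degen} (which caps the average degree $d$ by $22$), and the $\frac{9}{4}$-approximation for planar \textsc{Bipartization} that supplies $\rho_0 = \frac{9}{4}$. Given that quantitative bound on $\rho^*$, verifying $3 - \rho^* > 10^{-5}$ is a finite numerical check, and the Markov-plus-repetition step described above then yields the high-probability guarantee with essentially no extra work.
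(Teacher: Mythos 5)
Your proposal is correct and follows essentially the same route as the paper: the paper establishes the expected ratio $\rho^* \leq 2.99993033741$ via Lemma~\ref{lem:k4reduction} plus Algorithm~\ref{alg-oct} and then appeals to polynomially many independent repetitions, which is exactly your Markov-plus-amplification step spelled out explicitly. Your choice of $\alpha$ strictly inside $(10^{-5},\,3-\rho^*)$ so that $q=\rho^*/(3-\alpha)<1$ is the right way to make the paper's one-line remark rigorous, and nothing further is missing.
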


With some efforts, we can derandomize our algorithm to obtain a deterministic $(3-\alpha)$-approximation algorithm for \textsc{Bipartization} on disk graphs.
In fact, in Algorithm~\ref{alg-oct}, only the construction of the set $R$ is randomized.
We show in Appendix~\ref{sec-derandom} how to construct $R$ deterministically while still guaranteeing the nice properties of $R$ (the key point is to guarantee that there are many dead vertices).
The approximation ratio of our deterministic algorithm is slightly worse than the randomized one (while it is still smaller than 3).

\main*

\section{Generalizations} \label{sec-generalization}
In this section, we discuss some generalizations of our result.
First, we observe that our algorithm directly generalizes to \textit{pseudo-disk graphs}.
A set of geometric objects in the plane are called \textit{pseudo-disks} if each of them is homeomorphic to a disk and the boundaries of any two objects intersect at most twice.
A graph is a \textit{pseudo-disk graph} if it can be represented as the intersection graph of a set of pseudo-disks.

In our \textsc{Bipartization} algorithm, we only exploit two properties of disk graphs: (i) triangle-free disk graphs are planar, and (ii) $K_4$-free disk graphs are $c$-degenerate for some constant $c$.
In fact, pseudo-disk graphs also satisfy these two properties.

\begin{fact}[\cite{kratochvil1996intersection}]
Triangle-free pseudo-disk graphs are planar.
\end{fact}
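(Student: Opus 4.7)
The plan is to obtain a plane drawing of $G$ directly from a pseudo-disk realization, using triangle-freeness to rule out crossings. First I would fix a realization $\mathcal{D} = \{D_1, \ldots, D_n\}$ and choose an interior representative point $p_i$ in each $D_i$; without loss of generality I can assume $p_i$ lies in no other pseudo-disk (any $D_i$ entirely covered by the union of the others can be shrunk slightly without changing the intersection graph, or handled as a degenerate case by induction).

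Next, I would exploit triangle-freeness to position the ``bridge points'' of each edge. For every pair $\{i,j\}$ with $D_i \cap D_j \neq \emptyset$, note that if some third pseudo-disk $D_m$ met $D_i \cap D_j$, then $\{D_i, D_j, D_m\}$ would form a triangle in $G$. Hence $D_i \cap D_j$ is disjoint from every other pseudo-disk, and I may pick a point $q_{ij} \in D_i \cap D_j$ that lies in exactly two pseudo-disks, namely $D_i$ and $D_j$. Now, inside each $D_i$, the points $p_i$ and $\{q_{ij} : j \in N_G(i)\}$ can be joined by pairwise non-crossing simple arcs in a star-like pattern, since $D_i$ is a topological disk. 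Concatenating the arc from $p_i$ to $q_{ij}$ inside $D_i$ with the arc from $q_{ij}$ to $p_j$ inside $D_j$ produces a simple curve $\gamma_{ij}$ realizing the edge $\{i,j\}$.

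It remains to prove that the resulting drawing is plane. Two curves sharing a vertex meet only at that vertex by the star construction inside a common pseudo-disk. For curves $\gamma_{ij}$ and $\gamma_{kl}$ with $\{i,j\} \cap \{k,l\} = \emptyset$, any crossing point must lie in some intersection $D_a \cap D_b$ with $a \in \{i,j\}$ and $b \in \{k,l\}$, and in particular $D_a \cap D_b \neq \emptyset$, giving an edge $ab$ in $G$. The main obstacle — and the crux of the proof — is to upgrade this single edge into a triangle. Here I would use the defining pseudo-disk property: $\partial D_a$ and $\partial D_b$ cross at most twice, so the lens $D_a \cap D_b$ is a single Jordan domain whose boundary is the union of exactly one arc on $\partial D_a$ and one arc on $\partial D_b$. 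A careful topological analysis of how the sub-arcs of $\gamma_{ij}$ and $\gamma_{kl}$ enter and exit this lens — recalling that the endpoints $q_{ij}, q_{kl}$ lie outside $D_a \cap D_b$ by the two-disk property of the $q$'s — should force the second pseudo-disk among $\{D_i, D_j\}$ (or among $\{D_k, D_l\}$) to also meet $D_a \cap D_b$, which by the observation above contradicts triangle-freeness.

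The hard part is precisely this final lens argument, since the combinatorial two-boundary-crossings condition has to be leveraged to convert the existence of a topological crossing into a third pseudo-disk intersection. If a direct lens analysis proves delicate, a fallback is to modify the drawing by routing each sub-arc from $p_i$ to $q_{ij}$ so that it stays close to a fixed arc on $\partial D_i$ within $D_j$; then a crossing of $\gamma_{ij}$ with $\gamma_{kl}$ would force the boundaries $\partial D_a, \partial D_b$ to cross more than twice, violating the pseudo-disk hypothesis, and the contradiction would follow without having to exhibit the triangle explicitly.
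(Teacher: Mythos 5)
This statement is not proved in the paper at all: it is quoted as a known fact with a citation to Kratochv\'il's work, and is used as a black box. So the only question is whether your sketch stands on its own, and it does not: there is a genuine gap exactly at the step you yourself flag as the crux. Your construction (points $p_i$, lens points $q_{ij}$, triangle-freeness implying that $D_i\cap D_j$ meets no third pseudo-disk, star routing inside each $D_i$) is fine as far as it goes, but the claim that a crossing of $\gamma_{ij}$ and $\gamma_{kl}$ ``should force'' a second pseudo-disk of $\{D_i,D_j\}$ or $\{D_k,D_l\}$ into the lens $D_a\cap D_b$ is not established, and for the drawing as you defined it it is simply false. Take four pseudo-disks realizing the path $j-i-k-l$, with $D_i$ and $D_k$ two large overlapping regions and $D_j,D_l$ small regions attached far from the overlap: nothing in your construction prevents the arc $p_i$--$q_{ij}$ (which only has to stay inside $D_i$ and avoid other arcs incident to $i$) and the arc $p_k$--$q_{kl}$ from both wandering through $D_i\cap D_k$ and crossing there, yet the only pseudo-disks containing the crossing point are $D_i$ and $D_k$ and no triangle exists. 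So planarity of the drawing does not follow from what you actually proved; crossings must be excluded by constraining the routing, and that is precisely the work that is missing.

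The gap is not cosmetic, because the missing step is where the pseudo-disk hypothesis has to enter in an essential way: triangle-freeness of an intersection graph of arbitrary connected sets does not imply planarity (for example, $K_{3,3}$ is triangle-free and is an intersection graph of segments, hence of connected plane sets), so any argument that, like your main line, uses only ``the lens avoids all other sets'' cannot be completed. Your fallback --- anchoring each arc near a fixed boundary arc of $\partial D_i$ so that a crossing would force $\partial D_a$ and $\partial D_b$ to cross more than twice --- is the right kind of idea, but it is stated in one sentence and leaves open exactly the delicate points: how the arcs are anchored and kept disjoint inside each pseudo-disk, why a crossing of two such arcs really forces extra boundary crossings, and how to handle pairs whose boundaries cross zero times (one pseudo-disk contained in another), where no such argument is available and a separate (easy, but necessary) case analysis via triangle-freeness is needed. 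As written, the proposal is a plausible plan, not a proof of the fact.
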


\begin{fact}
$K_4$-free pseudo-disk graphs are $c$-degenerate for some constant $c$.
\end{fact}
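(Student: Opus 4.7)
The plan is to mimic the proof of Lemma~\ref{lem-degen} almost verbatim, with two modifications: replace the notion of a ``smallest disk'' with a combinatorially extremal pseudo-disk, and replace the explicit kissing-number step with a pseudo-disk analogue. Since pseudo-disk families are closed under taking subfamilies (removing any pseudo-disk leaves a pseudo-disk family), it suffices to show that every $K_4$-free pseudo-disk graph $G$ has at least one vertex of degree bounded by some absolute constant; removing such a vertex and recursing then yields a degeneracy ordering, so that the degeneracy is some constant $c$.

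The structural half of the argument carries over cleanly from Lemma~\ref{lem-degen}. Pick any vertex $v \in V(G)$. Since $G$ is $K_4$-free, the induced neighborhood $G[N_G(v)]$ must be triangle-free: any triangle on $N_G(v)$ together with $v$ would form a $K_4$. By the preceding Fact, every triangle-free pseudo-disk graph is planar, so $G[N_G(v)]$ is planar. Planar graphs are $4$-colorable (in fact, the paper already uses this in Observation~\ref{obs-dead}), and hence $G[N_G(v)]$ contains an independent set of size at least $|N_G(v)|/4$. Translated back to the geometry: among the pseudo-disks intersecting the pseudo-disk $P_v$ representing $v$, at least $|N_G(v)|/4$ are pairwise disjoint.

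Consequently, it suffices to prove the pseudo-disk analogue of Lemma~\ref{lem:kissingnumber}: there is an absolute constant $k_0$ such that in every finite pseudo-disk family there exists a pseudo-disk $P$ met by at most $k_0$ pairwise-disjoint pseudo-disks of the family. Given this, choosing $v$ to be the vertex corresponding to such a $P$ forces $|N_G(v)|/4 \leq k_0$, hence $|N_G(v)| \leq 4 k_0$. Iterating on $G - v$ then gives $c$-degeneracy with $c := 4 k_0$.

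The hard part will be this pseudo-disk kissing lemma, since we can no longer appeal to a ``smallest radius'' trick to reduce to unit disks. The natural route is to work directly with the arrangement of $\mathcal{P}$: pick $P$ as a pseudo-disk that is extremal in the arrangement (for example, one contributing an arc to the outer boundary of $\bigcup_{P' \in \mathcal{P}} P'$, or the first pseudo-disk hit by a vertical sweep line from $-\infty$), and then invoke the Kedem--Livne--Pach--Sharir combinatorial framework for pseudo-disks. In particular, the facts that two pseudo-disk boundaries cross in at most two points and that the union of $n$ pseudo-disks has $O(n)$ boundary complexity should let one charge each ``pairwise-disjoint intersector'' of $P$ against a bounded piece of $\partial P$ or an adjacent cell of the arrangement, yielding a constant bound $k_0$ that depends only on the pseudo-disk axioms and not on $|\mathcal{P}|$. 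Once this geometric lemma is in place, the combinatorial portion of the proof is identical to the disk case.
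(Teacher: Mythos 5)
Your reduction to exhibiting one vertex of bounded degree, and the first half of the argument ($K_4$-freeness makes $G[N_G(v)]$ triangle-free, hence planar by the preceding fact, hence containing an independent set of size at least $|N_G(v)|/4$), are fine. The problem is that all of the geometric content of the statement has been pushed into your ``pseudo-disk kissing lemma,'' and that lemma is not proved: you only assert that an extremal choice of $P$ together with union-complexity bounds ``should'' allow a charging argument. The specific selection rules you propose demonstrably fail. A pseudo-disk contributing an arc to the outer boundary of the union, or the first one met by a vertical sweep line, can be a huge region containing arbitrarily many pairwise-disjoint small pseudo-disks, so its disjoint-intersector count is unbounded. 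Nor does the ``smallest member'' trick from Lemma~\ref{lem-degen} transfer: a tiny pseudo-disk can be pierced by arbitrarily many pairwise-disjoint thin ``fingers,'' each of whose boundaries crosses its boundary exactly twice, so area- or inclusion-minimality does not help either (this is exactly why Lemma~\ref{lem:kissingnumber} has no direct pseudo-disk analogue). Thus no simple local extremality criterion identifies the right $P$, and your lemma --- stated for arbitrary pseudo-disk families, not just $K_4$-free ones --- is left entirely unsubstantiated; any proof of it would have to be a global counting argument rather than a charge against a bounded piece of a single $\partial P$.

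The paper sidesteps a kissing-type statement altogether and proves a global edge bound: every $K_r$-free pseudo-disk graph on $n$ vertices has $O(rn)$ edges. Inclusion edges are charged to the contained pseudo-disk (a pseudo-disk lying inside $r-1$ others would create a $K_r$, so each is charged $O(r)$ times), and every non-inclusion edge is witnessed by a boundary intersection point, whose depth is at most $r$ by $K_r$-freeness; the number of boundary intersection points of depth at most $d$ in a family of $n$ pseudo-disks is $O(dn)$~\cite{kedem1986union}. This gives average degree $O(r)$, and since the class is hereditary, $O(r)$-degeneracy follows with no planarity or four-coloring step and no choice of an extremal pseudo-disk. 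Note also that for the $K_4$-free families you actually need, your kissing lemma is a consequence of the degeneracy bound rather than a usable stepping stone toward it; so if you want to complete your route you will end up invoking essentially the same union-complexity machinery, and the cleaner fix is to prove the edge bound directly as the paper does.
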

\begin{proof}
We show that any $K_r$-free pseudo-disk graph of $n$ vertices only has $O(rn)$ edges, which implies the fact.
Let $G$ be a $K_r$-free pseudo-disk graph realized by a set $\mathcal{S}$ of pseudo-disks.
We say an edge $(S,S')$ of $G$ is an \textit{inclusion edge} if $S \subseteq S'$ or $S' \subseteq S$.
We first observe that $G$ has $O(rn)$ inclusion edges.
Indeed, a pseudo-disk $S \in \mathcal{S}$ cannot be contained in $r-1$ (or more) other pseudo-disks in $\mathcal{S}$, for otherwise there is a copy of $K_r$ in $G$.
Thus, if we charge every inclusion edge $(S,S')$ to the smaller one of $S$ and $S'$, every pseudo-disk is charged at most $r-2$ times.
This implies that $G$ has $O(rn)$ inclusion edges.
Now we bound the number of other edges in $G$.
Note that if $(S,S')$ is a non-inclusion edge in $G$, then the boundaries of $S$ and $S'$ intersect.
So it suffices to bound the total number of intersection points of the boundaries of the pseudo-disks in $\mathcal{S}$.
The depth of an intersection point $x$ is the number of pseudo-disks in $\mathcal{S}$ containing $x$.
It is well-known that in a set of $n$ pseudo-disks, the number of boundary intersection points of depth at most $d$ is bounded by $O(dn)$~\cite{kedem1986union}.
Since $G$ is $K_r$-free, every intersection point is of depth at most $r$.
Thus, the total number of boundary intersection points is $O(rn)$, implying that $G$ has $O(rn)$ edges.
\end{proof}

Therefore, our algorithm directly generalizes to pseudo-disk graphs.

\mainpseudo*


Next, we observe that our techniques apply to not only the specific problem of \textsc{Bipartization}.
In fact, it works for a wide class of vertex-deletion problems on (pseudo-)disk graphs.
Recall that in a vertex-deletion problem, the goal is to delete a minimum set $S$ of vertices from a graph $G$ such that $G-S$ satisfies some desired property $\textbf{P}$.
In \textsc{Bipartization}, the property $\textbf{P}$ is ``being bipartite''.
Our technique applies to any vertex-deletion problem that is \textbf{(i)} hereditary, i.e., if a graph satisfies $\textbf{P}$ then all its induced subgraphs also satisfy $\textbf{P}$, and \textbf{(ii)} triangle-conflicting, i.e., a graph satisfies $\textbf{P}$ only if it is triangle-free.
\begin{theorem} \label{thm-general}
If a hereditary and triangle-conflicting vertex-deletion problem admits a $(3-\delta)$-approximation algorithm on the class of planar graphs for some $\delta > 0$, then it admits a $(3-\alpha)$-approximation algorithm on the class of (pseudo-)disk graphs for some $\alpha > 0$.
\end{theorem}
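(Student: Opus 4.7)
The plan is to rerun Algorithm~\ref{alg-oct} essentially verbatim, with the planar-graph subroutine \textsc{PlanarBip} replaced by the hypothesized $(3-\delta)$-approximation $\mathcal{A}$ for the problem on planar graphs, and to verify that every step of the analysis of Section~\ref{sec-analysis} continues to go through. Let $\Pi$ denote the vertex-deletion problem associated to the property $\textbf{P}$. Two abstract properties of $\Pi$ will carry the entire argument: (i) by triangle-conflictingness, every feasible $\Pi$-solution intersects every triangle of the input graph, and (ii) by hereditariness, the output of $\mathcal{A}$ on an induced subgraph remains a valid partial solution that extends to a feasible $\Pi$-solution when unioned with any triangle-hitting set we have already committed to.

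First, I would redo the $K_4$-preprocessing. Property (i) implies that any $\Pi$-solution must contain at least two vertices of every $K_4$, because deleting one vertex from $K_4$ still leaves a triangle; thus Lemma~\ref{lem:k4reduction} applies with $\rho$ being the approximation guarantee of Algorithm~\ref{alg-oct} on the $K_4$-free part, and the preprocessing costs only a factor of $\max\{2,\rho\}$. Next I would justify the legality of the planar calls: at each invocation of $\mathcal{A}$ in the algorithm, the graph handed in is the induced subgraph obtained by removing a maximal triangle packing (and possibly more), so it is triangle-free; for (pseudo-)disk graphs, triangle-free implies planar, so $\mathcal{A}$ is entitled to return a $(3-\delta)$-approximation.

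Then I would re-examine the three bounds on $|S_1|$, $\mathbb{E}[|S_2|]$, $|S_3|$ from Lemmas~\ref{lem-S1}, \ref{lem-S2}, \ref{lem-S3}. Each of these lemmas uses only (a) that every $\Pi$-solution contains one vertex per triangle in a packing, (b) the $K_4$-free structural argument in Observation~\ref{obs-dead} (whose only graph-theoretic input is that $G[N(v)]$ is triangle-free and hence planar, giving the $\tfrac{3}{4}$-vertex-cover bound), and (c) the constant-degeneracy bound on $K_4$-free (pseudo-)disk graphs. All three remain valid for $\Pi$; replacing the number $\rho_0 = 9/4$ by $\rho_0 = 3-\delta$ and keeping $d \leq 2c$ (for the pseudo-disk degeneracy constant $c$), the same three approximation ratios $\rho_1, \rho_2, \rho_3$ as linear functions of $a$ and $b$ are obtained, and the same derivation yields
\[
    (3^{-\beta}+1)\,\rho^2 - \bigl((2+\rho_0)\cdot 3^{-\beta} + 3\bigr)\,\rho + 2\rho_0\cdot 3^{-\beta} \;\le\; 0,
\]
where $\beta = \tfrac{3}{8}d + \tfrac{5}{4}$. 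Evaluating the left-hand side at $\rho = 3$ gives $(3-\rho_0)\cdot 3^{-\beta} = \delta \cdot 3^{-\beta} > 0$, so $\rho=3$ is strictly outside the interval between the two roots; a quick check of the sum of roots shows both roots are positive and less than $3$, so the larger root is $3 - \alpha$ for some $\alpha = \alpha(\delta,c) > 0$. This gives the desired $(3-\alpha)$-approximation for $\Pi$ on (pseudo-)disk graphs.

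The main obstacle is not a new technical barrier but rather a careful audit of which parts of Section~\ref{sec-analysis} use the specific semantics of \textsc{Bipartization} and which use only the abstract properties (i) and (ii). The delicate step is Observation~\ref{obs-dead}, whose statement mentions no OCT-specific object, but whose conclusion about $\mathbb{E}[|D|]$ is plugged into a bound on $|X_2|$ via hereditariness of $\textbf{P}$; one must verify that the bound $\mathbb{E}[|X_2|]\le \rho_0\cdot (\mathsf{opt} - \tfrac{1}{3}|R| - \mathbb{E}[|\mathcal{T}'|])$ still holds because $S_{\textnormal{opt}}\setminus(R\cup V(\mathcal{T}'))$ is still a $\Pi$-solution of the remaining planar induced subgraph. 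Once this abstraction is carried through uniformly, the derandomization in Appendix~\ref{sec-derandom} applies without change, since it only alters the construction of $R$.
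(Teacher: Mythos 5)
Your proposal is correct and follows essentially the same route as the paper: replace \textsc{PlanarBip} by the assumed $(3-\delta)$-approximation, observe that the entire analysis (the $K_4$-reduction, the legality of the planar calls, Lemmas~\ref{lem-S1}--\ref{lem-S3}, and the degeneracy bound) uses only that the problem is hereditary and triangle-conflicting, and rerun Section~\ref{sec-analysis} with $\rho_0 = 3-\delta$. Your explicit check that the quadratic is positive at $\rho = 3$ (with value $\delta\cdot 3^{-\beta}$) and that its larger root lies strictly below $3$ is a welcome quantitative detail that the paper leaves implicit.
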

\begin{proof}
We just replace the \textsc{PlanarBip} sub-routine in Algorithm~\ref{alg-oct} with the $(3-\delta)$-approximation algorithm for the problem on planar graphs.
As one can easily verify, our analysis only depends on the fact that the vertex-deletion problem is hereditary and triangle-conflicting.
Thus, the same analysis shows that this is a $(3-\alpha)$-approximation algorithm for the problem on (pseudo-)disk graphs.
\end{proof}

Well-studied instances of hereditary and triangle-conflicting vertex-deletion problems (other than \textsc{Bipartization}) include \textsc{Vertex Cover}, \textsc{Feedback Vertex Set}, \textsc{Triangle Hitting}, etc.
Most of these problems already have $(3-\delta)$-approximation algorithms on disk graphs, and some of them do not have known $(3-\delta)$-approximation algorithms on planar graphs.
Thus, at this point, we can only obtain interesting results for \textsc{Bipartization}.
However, we believe that this is not the full power of Theorem~\ref{thm-general}.
To provide some evidences, let us consider a vertex-deletion problem, \textsc{Planarization\&Bipartization}, in which the desired property $\textbf{P}$ is ``being planar and bipartite''.
This problem is clearly hereditary and triangle-conflicting.
On planar graphs, it is equivalent to \textsc{Bipartization} and thus admits a $(3-\delta)$-approximation algorithm.
Therefore, Theorem~\ref{thm-general} gives a $(3-\alpha)$-approximation algorithm for \textsc{Planarization\&Bipartization} on (pseudo-)disk graphs.
Although this problem itself is somehow artificial and not well-studied, it reveals that Theorem~\ref{thm-general} could possibly have further applications in the future.

\bibliographystyle{plain}
\bibliography{my_bib.bib}

\appendix

\section{Derandomization} \label{sec-derandom}
The only randomized step in Algorithm~\ref{alg-oct} is Step~6 and hence the set $R$ is a random variable.
Towards derandomization we construct three sets $R_1,R_2$, and $R_3$ such that at least one of them is ``good''.
Next we explain the construction of these sets $R_1,R_2$, and $R_3$. Let $H'=\{v\in V({\cal T})~:~ d_{G[V({\cal T})]}(v)>100\}$ be the set of ``high'' degree vertices in $G[V({\cal T})]$. Let ${\cal H}$ be the set of triangles $T$ in ${\cal T}$ such that $T\cap H' \neq \emptyset$. Let ${\cal L}={\cal T}\setminus {\cal H}$.   
Let $H=V({\cal H})$ and $L=V({\cal L})=V({\cal T}) \setminus H$.

\begin{itemize}
    \item Let $I$ be a maximal distance-$3$ independent set in $G[L]$ constructed as follows. Initially, set $I:=\emptyset$ and $C:=L$. As long as $C\neq \emptyset$, pick a vertex $v$ from $C$ and add it to $I$, and delete $N_{G[L]}[N_{G[L]}[N_{G[L]}[v]]]$ from $C$. 
    \item Let $I_1,I_2,I_3$ be arbitrary subsets of $I$ such that they are pairwise disjoint and $|I_i|=\lfloor \frac{I}{3} \rfloor$ for all $i\in \{1,2,3\}$.  
\end{itemize}
Before explaining the construction of $R_1,R_2$, and $R_3$, we prove the following lemma. 

\begin{lemma}
\label{lem:atmost2}
Let $i\in \{1,2,3\}$ and $v\in I_i$. Let $T$ be a triangle in ${\cal T}$ such that $N_G(v)\cap T\neq \emptyset$. Then, there is at least one vertex $w\in T$ such that $w\notin N_G(v)$. 
\end{lemma}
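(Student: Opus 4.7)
My plan is to prove the lemma by a simple case analysis, using only the fact that $G$ is $K_4$-free (which is the standing assumption throughout Algorithm~\ref{alg-oct} and its analysis), together with the paper's convention that $v \notin N_G(v)$. The distance-$3$ independence of $I$ (and the partition into $I_1,I_2,I_3$) does not seem to enter here; I expect that structure to be used in later lemmas about the sets $R_1,R_2,R_3$, not in this one.

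First, I would observe that it suffices to rule out the possibility $T \subseteq N_G(v)$. So suppose for contradiction that every vertex of $T$ lies in $N_G(v)$. I then split into two cases depending on whether $v \in T$.

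In the case $v \in T$, the conclusion is immediate: by the paper's definition, $N_G(v) = \{x \in V(G) \setminus \{v\} : (x,v) \in E(G)\}$ explicitly excludes $v$ itself, so $v$ itself serves as the required vertex $w \in T$ with $w \notin N_G(v)$. This already contradicts the assumption $T \subseteq N_G(v)$.

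In the case $v \notin T$, let $T = \{x,y,z\}$. By assumption all three of $x,y,z$ are adjacent to $v$ in $G$, and since $T$ is a triangle of $G$ the three edges $(x,y),(y,z),(x,z)$ are also present. Together with the three edges $(v,x),(v,y),(v,z)$ this means the four distinct vertices $v,x,y,z$ induce a $K_4$ in $G$. This contradicts the fact that Algorithm~\ref{alg-oct} is run on a $K_4$-free graph (the preprocessing step via Lemma~\ref{lem:k4reduction} ensures this, and everything in Section~3.2 is stated under that assumption).

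The only potential subtlety is confirming that $v$ may belong to some triangle in $\mathcal{T}$ (indeed, $v \in I \subseteq L = V(\mathcal{L}) \subseteq V(\mathcal{T})$, so $v$ is a vertex of exactly one triangle of $\mathcal{T}$), which is why the first case is not vacuous; but this case is handled trivially by the definition of $N_G(v)$. There is no real obstacle: the lemma is essentially a one-line consequence of $K_4$-freeness, and its role is presumably to later bound, for each $T \in \mathcal{T}$ and each $v \in I_i$, how many vertices of $T$ can be forced into or out of the deterministic analogue of $R$.
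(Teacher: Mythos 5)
Your proof is correct and matches the paper's argument exactly: the case $v \in T$ is settled by the convention $v \notin N_G(v)$, and the case $v \notin T$ uses $K_4$-freeness of the preprocessed graph to forbid $T \subseteq N_G(v)$. Your observation that the distance-$3$ independence of $I$ plays no role here is also consistent with the paper.
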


\begin{proof}
If $v\in T$, then $v \notin N_G(v)$ and $v$ is the required vertex. Otherwise, we have that $v\notin T$. Then, if all the vertices of $T$ are adjacent to $v$, the there is a $K_4$ with vertex set $T\cup \{v\}$. This is a contradiction to the fact that $G$ is $K_4$-free. This completes the proof of the lemma.  
\end{proof}

Now we construct $R_1,R_2$, and $R_3$ as follows. 

\begin{itemize}
    \item[(a)] Initially, set $R_i:=H\cup N_{G[L]}(I_i)$ for all $i\in \{1,2,3\}$. 
    \item[(b)] Because $I$ is a distance-$3$ independent set, for each triangle $T\in {\cal L}$, there is at most one index $i\in \{1,2,3\}$ such that $N_{G[L]}(I_i)\cap T \neq \emptyset$. 
    \item[(c)] Now, for each triangle $T$ in ${\cal L}$ we do the following. Let $T=\{x_1,x_2,x_3\}$. 
    \begin{itemize}
        \item Suppose $T\cap R_i=\emptyset$ for all $i\in \{1,2,3\}$. Then, $R_1:=R_1\cup \{x_1,x_2\}$, $R_2:=R_2\cup \{x_2,x_3\}$, and $R_3:=R_3\cup \{x_3,x_1\}$. 
        \item Otherwise, there is exactly one index $i\in \{1,2,3\}$ such that $N_{G[L]}(I_i)\cap T \neq \emptyset$. Moreover, since $I_i$ is a distance-$3$ independent set, there is exactly one vertex $v\in I_i$ such that $N_{G[L]}(v)\cap T \neq \emptyset$. Without loss of generality, by Lemma~\ref{lem:atmost2}, let $\{x_1,x_2\} \supseteq N_{G[L]}(v)\cap T$. Let $\{i_1,i_2\}=\{1,2,3\}\setminus \{i\}$. Now, we set 
        $R_i:=R_i\cup \{x_1,x_2\}$, $R_{i_1}:=R_{i_1}\cup \{x_2,x_3\}$, and $R_{i_2}:=R_{i_2}\cup \{x_3,x_1\}$. 
    \end{itemize}
\end{itemize}
This completes the construction of $R_1,R_2$, and $R_3$.  
For each $i\in \{1,2,3\}$, let $R_i'=R_i\cap L$. 
From the construction of $R_1,R_2, R_3, R_1',R_2'$, and $R'_{3}$,  
the following properties are satisfied.   

\begin{itemize}
     \item[(i)] For each triangle $T$ in ${\cal L}$, exactly two vertices from $T$ belong to $R_i$ for any $i\in \{1,2,3\}$. This implies that $|R_i'|=2\vert {\cal L}\vert$ for all $i\in \{1,2,3\}$. 
     \item[(ii)] For each triangle $T$ in ${\cal L}$ and $x\in T$, there are exactly two distinct indices $i,j\in \{1,2,3\}$ such that $x\in R_i$ and $x\in R_j$ 
     \item[(iii)] For each $i\in \{1,2,3\}$, $R_i\cap I_i=\emptyset$ and $N_{G[L]}(I_i)\subseteq R_i$.  
     \item[(iv)] For each triangle $T$ in ${\cal H}$ and $i\in \{1,2,3\}$, $T\subseteq R_i$.  
 \end{itemize}

Recall that $S_{\sf opt}$ is an optimum solution. Next, we prove a lemma that is analogous to Observation~\ref{lem:largehitopt}. 

\begin{lemma}
There is an index $j\in \{1,2,3\}$ such that $|R_j\cap S_{\sf opt}|\geq \frac{\vert R_j \vert}{3}$. 
\end{lemma}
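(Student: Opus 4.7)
The plan is to use a double-counting argument over the three sets together with the pigeonhole principle. The key observation is that properties (i)--(iv) make the three sets $R_1, R_2, R_3$ symmetric in size, and also control, for each vertex of $V(\mathcal{T})$, how many of the $R_i$'s contain it.

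First, I would compute $|R_i|$ uniformly in $i$. By property (iv), $V(\mathcal{H}) \subseteq R_i$ for every $i$, contributing exactly $3|\mathcal{H}|$ vertices. By property (i), $|R_i \cap V(\mathcal{L})| = 2|\mathcal{L}|$ for every $i$. Since $V(\mathcal{T}) = V(\mathcal{H}) \sqcup V(\mathcal{L})$ and $R_i \subseteq V(\mathcal{T})$ (this is immediate from the construction, as $R_i$ starts as $H \cup N_{G[L]}(I_i) \subseteq V(\mathcal{T})$ and then only gains vertices of triangles in $\mathcal{L}$), we get
\[
|R_1| = |R_2| = |R_3| = 3|\mathcal{H}| + 2|\mathcal{L}|.
\]

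Next, I would sum $\sum_{i=1}^{3} |R_i \cap S_{\sf opt}|$ by splitting the contribution of each vertex. A vertex $x \in V(\mathcal{H}) \cap S_{\sf opt}$ lies in all three $R_i$'s by (iv), so it contributes $3$ to the sum. A vertex $x \in V(\mathcal{L}) \cap S_{\sf opt}$ lies in exactly two of the $R_i$'s by (ii), so it contributes $2$. Hence
\[
\sum_{i=1}^{3} |R_i \cap S_{\sf opt}| \;=\; 3\,|V(\mathcal{H}) \cap S_{\sf opt}| \;+\; 2\,|V(\mathcal{L}) \cap S_{\sf opt}|.
\]
Since $\mathcal{T} = \mathcal{H} \cup \mathcal{L}$ is a triangle packing and $S_{\sf opt}$ is an odd cycle transversal, $S_{\sf opt}$ must contain at least one vertex from every triangle in $\mathcal{H}$ and every triangle in $\mathcal{L}$, giving $|V(\mathcal{H}) \cap S_{\sf opt}| \geq |\mathcal{H}|$ and $|V(\mathcal{L}) \cap S_{\sf opt}| \geq |\mathcal{L}|$. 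Plugging these bounds in,
\[
\sum_{i=1}^{3} |R_i \cap S_{\sf opt}| \;\geq\; 3|\mathcal{H}| + 2|\mathcal{L}| \;=\; |R_i|.
\]

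Finally, I would invoke pigeonhole: since the three quantities $|R_i \cap S_{\sf opt}|$ have sum at least $|R_i| = (|R_1|+|R_2|+|R_3|)/3$, at least one of them, call it $|R_j \cap S_{\sf opt}|$, must be at least $|R_j|/3$, as claimed. There is no real obstacle in this proof: all the work was done upstream in ensuring properties (i)--(iv) of the deterministic construction; what remains is the clean two-line double count plus averaging.
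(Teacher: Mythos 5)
Your proof is correct and takes essentially the same route as the paper's: both arguments rest on the facts that $S_{\mathsf{opt}}$ hits every triangle of the disjoint packing $\mathcal{H}\cup\mathcal{L}$, that each vertex of $V(\mathcal{L})$ lies in exactly two of the sets $R_1,R_2,R_3$ (property (ii)) while each vertex of $V(\mathcal{H})$ lies in all three (property (iv)), followed by averaging over the three indices. The only cosmetic difference is that the paper applies the pigeonhole to $R'_i=R_i\cap L$ and treats the $V(\mathcal{H})$-part separately (where the $1/3$ bound holds for every index), whereas you perform a single global average over the full sets $R_i$; the two computations are equivalent.
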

\begin{proof}
Notice that $S_{\sf opt}$ contains at least one vertex from each triangle $T$ in ${\cal L}$. This statement along with the property (ii) implies the inequality
\begin{equation*}
|R'_1\cap S_{\sf opt}|+|R'_2\cap S_{\sf opt}|+|R'_3\cap S_{\sf opt}| \geq 2 |{\cal L}|.
\end{equation*}
This implies that there is an index $j\in \{1,2,3\}$ such that 
$|R'_j\cap S_{\sf opt}| \geq \frac{2\vert {\cal L}\vert}{3}$. By property (ii) above, we know that $|R_j'|=2\vert {\cal L}\vert$. Thus, we get that $|R'_j\cap S_{\sf opt}| \geq \frac{|R_j'|}{3}$. 

By property (iv) above, we get that $R_j\setminus R_j'=V({\cal H})$ and $\vert  R_j\setminus R_j'\vert = 3\vert {\cal H}\vert$. For each triangle $T$ in ${\cal H}$, there is at least one vertex from $T$ that belongs to $S_{\sf opt}$. Thus, we get
\begin{equation*}
|(R_j\setminus R_j')\cap S_{\sf opt}| \geq |\mathcal{H}| = \frac{|R_j \backslash R_j'|}{3}.
\end{equation*}
Finally, we have $|R_j\cap S_{\sf opt}| \geq \frac{|R_j|}{3}$,
as $|R'_j\cap S_{\sf opt}| \geq \frac{|R_j'|}{3}$ and $|(R_j\setminus R_j')\cap S_{\sf opt}| \geq \frac{\vert R_j\setminus R_j'\vert}{3}$.
\end{proof}

Now recall the definition of a dead vertex. We say that a vertex $v\in V({\cal T})$ is {\em dead} with respect to $R_i$, where $i\in \{1,2,3\}$,  if $v\notin R_i$ and $v$ is not contained in any triangle in $G[V({\cal T})\setminus R_i]$. Next we prove that there are $\Omega(|{\cal T}|)$ many dead vertices with respect to $R_i$ for each $i\in \{1,2,3\}$. Towards that we prove that for each $i\in \{1,2,3\}$, all the vertices in $I_i$ are dead with respect to $R_i$ and $\vert I_i \vert = \Omega(|{\cal T}|)$. 

\begin{lemma}
For any  $i\in \{1,2,3\}$, all the vertices in $I_i$ are dead with respect to $R_i$
\end{lemma}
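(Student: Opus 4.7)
The plan is to verify the two conditions in the definition of a dead vertex: for $v \in I_i$ we must show (a) $v \notin R_i$, and (b) $v$ is not contained in any triangle of $G[V(\mathcal{T}) \setminus R_i]$. Condition (a) is immediate from property (iii) in the construction, which says $R_i \cap I_i = \emptyset$. So the real work lies in condition (b).

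For condition (b), I would proceed by contradiction: assume $v$ sits in a triangle $\{v,u,w\}$ of $G[V(\mathcal{T}) \setminus R_i]$, so in particular $u,w \in V(\mathcal{T}) \setminus R_i$ and $u,w \in N_G(v)$. The key observation is that every neighbor of $v$ inside $V(\mathcal{T}) = H \cup L$ must already lie in $R_i$, which will immediately contradict $u,w \notin R_i$. I would split on which part of $V(\mathcal{T})$ a neighbor $u$ of $v$ belongs to:

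\begin{itemize}
\item If $u \in H$, then $u \in V(\mathcal{H})$ and by property (iv) every vertex of every triangle in $\mathcal{H}$ is in $R_i$, so $u \in R_i$.
\item If $u \in L$, then $u$ is a neighbor of $v$ in $G[L]$, i.e.\ $u \in N_{G[L]}(v) \subseteq N_{G[L]}(I_i)$. Property (iii) tells us $N_{G[L]}(I_i) \subseteq R_i$, so again $u \in R_i$.
\end{itemize}

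Either way every neighbor of $v$ in $V(\mathcal{T})$ belongs to $R_i$, so no triangle of $G[V(\mathcal{T}) \setminus R_i]$ can contain $v$. Together with $v \notin R_i$ this shows that $v$ is dead with respect to $R_i$, finishing the proof. I do not anticipate any serious obstacle here: the argument is a direct unwinding of the properties (iii) and (iv) established right after the construction of $R_1, R_2, R_3$, and it does not even require the $K_4$-freeness of $G$ (that hypothesis was consumed earlier, in Lemma~\ref{lem:atmost2} and in guaranteeing that each triangle in $\mathcal{L}$ contributes exactly two vertices to each $R_i$).
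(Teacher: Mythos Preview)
Your proof is correct and follows essentially the same approach as the paper: both arguments use properties (iii) and (iv) to conclude that every vertex of $I_i$ is isolated in $G[V(\mathcal{T})\setminus R_i]$ (hence certainly not in any triangle there). The paper states this conclusion in one sentence, whereas you spell out the case split $u\in H$ versus $u\in L$ explicitly; your version is just a more detailed rendering of the same argument.
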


\begin{proof}
Fix an index $i\in \{1,2,3\}$. From properties (iii) and (iv) we get that $I_i \subseteq V({\cal T}) \setminus R_i$ and each vertex in $I_i$ is an isolated vertex in $G[V({\cal T})\setminus R_i]$. Therefore, all the vertices  in $I_i$ are dead with respect to $R_i$. 
\end{proof}

\begin{lemma}
For any $i\in \{1,2,3\}$, $\vert I_i \vert \geq \frac{\vert {\cal T}\vert}{400^3}$
\end{lemma}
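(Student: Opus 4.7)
The plan is to show that $|L|$ is a constant fraction of $|V(\mathcal{T})|$ and then that the greedy construction gives $|I| = \Omega(|L|)$, so that altogether $|I_i| = \Omega(|\mathcal{T}|)$ with a sufficiently strong constant to beat $1/400^3$.

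First, I will bound the number of triangles of $\mathcal{T}$ that touch a high-degree vertex, and hence lower-bound $|L|$. The key input here is Lemma~\ref{lem-degen}: $G$ is $11$-degenerate, and therefore so is the induced subgraph $G[V(\mathcal{T})]$. This gives $\sum_{v \in V(\mathcal{T})} d_{G[V(\mathcal{T})]}(v) \leq 2 \cdot 11 \cdot |V(\mathcal{T})| = 66|\mathcal{T}|$. Since every vertex of $H'$ has degree strictly more than $100$ in $G[V(\mathcal{T})]$, this yields $|H'| \leq 66|\mathcal{T}|/101$. Because $\mathcal{T}$ is a triangle packing, any vertex appears in at most one triangle of $\mathcal{T}$; hence $|\mathcal{H}| \leq |H'| \leq 66|\mathcal{T}|/101$, so $|\mathcal{L}| \geq 35|\mathcal{T}|/101$ and $|L| = 3|\mathcal{L}| \geq 105|\mathcal{T}|/101 > |\mathcal{T}|$.

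Second, I will bound the size of a closed $3$-neighborhood in $G[L]$. By the definition of $L$, every $v \in L$ satisfies $d_{G[V(\mathcal{T})]}(v) \leq 100$, and since $G[L]$ is an induced subgraph of $G[V(\mathcal{T})]$ we have $d_{G[L]}(v) \leq 100$ as well. Hence $|N_{G[L]}[v]| \leq 101$, and iterating the closed-neighborhood operator three times gives $|N_{G[L]}[N_{G[L]}[N_{G[L]}[v]]]| \leq 101^3$ for every $v \in L$.

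Third, because each iteration of the greedy construction of $I$ picks one vertex and removes at most $101^3$ vertices from $C$, we obtain $|I| \geq |L|/101^3 \geq |\mathcal{T}|/101^3$. Dividing by $3$ and taking the floor gives $|I_i| = \lfloor |I|/3 \rfloor \geq |\mathcal{T}|/(3 \cdot 101^3)$, and since $3 \cdot 101^3 < 400^3$ we conclude $|I_i| \geq |\mathcal{T}|/400^3$ (the small constant-size defect from the floor is absorbed by the factor of roughly $20$ slack between $3 \cdot 101^3$ and $400^3$; for $|\mathcal{T}|$ below a small threshold, the target lower bound is below $1$ and the statement is essentially vacuous). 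The only real step where care is needed is combining the $11$-degeneracy of $G[V(\mathcal{T})]$ with the packing property of $\mathcal{T}$ to control $|\mathcal{H}|$; the ball-size estimate and the greedy counting are then routine.
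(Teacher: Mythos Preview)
Your proof is correct and follows essentially the same approach as the paper's: both use the $11$-degeneracy of $G[V(\mathcal{T})]$ (via Markov/counting) to show $|L|=\Omega(|\mathcal{T}|)$, then use the degree bound $\le 100$ in $G[L]$ to bound the size of a closed $3$-ball and deduce $|I|=\Omega(|L|)$, and finally divide by~$3$. Your constants are slightly sharper (you get $|L|\ge 105|\mathcal{T}|/101$ and $|I|\ge |L|/101^3$, versus the paper's $|L|\ge 3|\mathcal{T}|/4$ and $|I|\ge |L|/(4\cdot100^3)$), and both arguments share the same harmless floor/small-$|\mathcal{T}|$ technicality.
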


\begin{proof}
First, we prove that $\vert I \vert \geq \frac{3|{\cal T}|}{4^2 \cdot 100^3} $. 
Since the degree of each vertex in $G[L]$ is at most $100$,  $N_{G[L]}[N_{G[L]}[N_{G[L]}[I]]] \leq |I|+100 |I|+ 100^2 |I|+100^3|I| \leq 4 \cdot 100^3 |I|$. Since $|I|$ is a maximal distance-$3$ independent set in $G[L]$, $N_{G[L]}[N_{G[L]}[N_{G[L]}[I]]]=L$. This implies that $|I|\geq \frac{|L|}{4 \cdot 100^3}$.  

Let $n'=|V({\cal T})|=3|{\cal T}|$. 
By Lemma~\ref{lem-degen}, we know that $G$ is $11$-degenerate. Hence $G[V({\cal T})]$ is $11$-degenerate. This implies 
that the average degree of a vertex in $G[V({\cal T})]$ is at most $22$.  
Thus, the number of vertices with degree at least 100 is at most $\frac{n'}{4}$. This implies that $|H|\leq \frac{3n'}{4}$ and $|L|\geq \frac{n'}{4}=\frac{3|{\cal T}|}{4}$. Thus, since $|I|\geq \frac{|L|}{4 \cdot 100^3}$, we get that $|I|\geq \frac{3|{\cal T}|}{4^2 \cdot 100^3}$. 

Now, for any $i\in \{1,2,3\}$, we know that $|I_i|=\lfloor \frac{|I|}{3}\rfloor \geq \frac{|I|}{3}-2$. Therefore, we get that  for any $i\in \{1,2,3\}$, $|I_i|\geq \frac{|{\cal T}|}{400^3}$. This completes the proof of the lemma. 
\end{proof}

To summarize, there exists $i \in \{1,2,3\}$ such that $|R_i \cap S_\mathsf{opt}| \geq \frac{|R_i|}{3}$ and the number of dead vertices in $V(\mathcal{T})$ with respect to $R_i$ is at least $\frac{|\mathcal{T}|}{400^3}$.
Without loss of generality, we can assume $R_1$ satisfies this condition (in the algorithm, we do not know which one is good, but we can try all three and take the best one among the three resulting solutions).
Set $R = R_1$ and let $D \subseteq V(\mathcal{T})$ be the set of dead vertices with respect to $R$.
Now Observations~\ref{lem:largehitopt} and~\ref{obs-T'dead} still hold.
Therefore, we still have Equation~\eqref{eq-S2D} (without the expectation notations $\mathbb{E}[ \cdot ]$), i.e.,
\begin{equation*}
    |S_2| \leq (\rho_0+(3-\rho_0)(a+b)) \cdot \mathsf{opt} - (3-\rho_0) \cdot \frac{|D|}{3}.
\end{equation*}
We have $|D| \geq \frac{|\mathcal{T}|}{400^3} = \frac{a \cdot \mathsf{opt}}{400^3}$, so the above inequality implies
\begin{equation*}
    |S_2| \leq (\rho_0+(1-400^{-3})(3-\rho_0)a + (3-\rho_0)b) \cdot \mathsf{opt}.
\end{equation*}
Now the only difference between the above inequality and the one in Lemma~\ref{lem-S2} is that the number $(\frac{1}{3})^{\frac{3}{8}d + \frac{1}{4}}$ is replaced with an even smaller constant $400^{-3}$.
But as long as this number is a positive constant, our analysis in Section~\ref{sec-analysis} can give us an approximation ratio better than 3.
Also, we can directly see from the inequality that if $b$ is sufficiently small, no matter what $a$ is, we always have $|S_2| \leq (3-\alpha) \cdot \mathsf{opt}$ for some constant $\alpha>0$.

\end{document}